\tikzstyle{vertex}=[circle, draw, inner sep=0pt, minimum size=6pt]
\newcommand{\vertex}{\node[vertex]}
\newcommand{\ints}{\ensuremath{\mathbb{Z}}\xspace}
\newcommand{\A}{\ensuremath{\operatorname{\textsc{Alg}}}\xspace}
\newcommand{\aparent}{\ensuremath{\operatorname{\textsc{Parent}}}\xspace}
\newcommand{\alayersparent}{\ensuremath{\operatorname{\textsc{LowParent}}}\xspace}
\newcommand{\aparentplus}{\ensuremath{\operatorname{\textsc{LowParent}}}\xspace}
\newcommand{\ALG}{\ensuremath{\operatorname{\textsc{Alg}}}\xspace}
\newcommand{\greedy}{\ensuremath{\textsc{Greedy}}\xspace}
\newcommand{\opt}{\ensuremath{\operatorname{\textsc{Opt}}}\xspace}
\newcommand{\inc}{\ensuremath{\text{inc}}\xspace}
\newcommand{\onopt}{\ensuremath{\opt^\textsc{\inc}}\xspace}
\newcommand{\onoptT}{\ensuremath{\onopt_T}\xspace}
\newcommand{\onoptC}{\ensuremath{\onopt_C}\xspace}
\newcommand{\offopt}{\ensuremath{\opt^\textsc{off}}\xspace}
\newcommand{\cora}{\ensuremath{\mathbb{CR}}\xspace}
\newcommand{\cron}{\ensuremath{\cora^{\textsc{\inc}}}\xspace}
\newcommand{\croff}{\ensuremath{\cora^{\textsc{off}}}\xspace}
\newcommand{\gammaC}{\ensuremath{\gamma_C}\xspace}
\newcommand{\gammaT}{\ensuremath{\gamma_T}\xspace}
\newcommand{\gammaI}{\ensuremath{\gamma_I}\xspace}
\newcommand{\ds}{\textrm{DS}\xspace}
\newcommand{\cds}{\textrm{CDS}\xspace}
\newcommand{\tds}{\textrm{TDS}\xspace}
\newcommand{\ids}{\textrm{IDS}\xspace}
\newcommand{\SET}[1]{\ensuremath{\left\{#1\right\}}\xspace}
\newcommand{\SETOF}[2]{\left\{#1\mid#2\right\}\xspace}
\newcommand{\EDGE}[2]{\SET{#1,#2}\xspace}
\newcommand{\CEIL}[1]{\lceil #1\rceil}
\newcommand{\cc}{\mathrm{c}}
\newcommand{\incr}{incremental\xspace}
\journalname{Algorithmica}
\begin{document}

\title{Online Dominating Set\,\thanks{A preliminary version of this paper
appeared in the \emph{15th Scandinavian Symposium and Workshops on Algorithm Theory (SWAT)}, LIPIcs, vol.\ 53, Schloss Dagstuhl - Leibniz-Zentrum f\"{u}r Informatik, 2016, pp.~21:1--21:15. Most of the work was done while the fourth author was at the University of Southern Denmark.
The first, third, fourth, and fifth authors were supported in part by the
Danish Council for Independent Research, Natural Sciences,
grants DFF-1323-00247 and DFF-7014-00041,
and the Villum Foundation, grant VKR023219.}}


\author{Joan Boyar \and Stephan J. Eidenbenz \and
Lene M. Favrholdt \and Michal Kotrb\v{c}\'{\i}k \and Kim S. Larsen}

\authorrunning{Boyar, Eidenbenz, Favrholdt, Kotrb\v{c}\'{\i}k,  Larsen} 

\institute{Joan Boyar \at
              Department of Mathematics and Computer Science,
              University of Southern Denmark, Odense, Denmark \\
              \email{joan@imada.sdu.dk}
           \and
           Stephan J. Eidenbenz \at
              Los Alamos National Laboratory, Los Alamos, NM, USA \\
              \email{eidenben@lanl.gov}
           \and
           Lene M. Favrholdt \at
              Department of Mathematics and Computer Science,
              University of Southern Denmark, Odense, Denmark \\
              \email{lenem@imada.sdu.dk}
           \and
           Michal Kotrb\v{c}\'{\i}k \at
              School of Mathematics and Physics,
              University of Queensland, Brisbane, Australia \\
              \email{m.kotrbcik@gmail.com}
           \and
           Kim S. Larsen \at
              Department of Mathematics and Computer Science,
              University of Southern Denmark, Odense, Denmark \\
              \email{kslarsen@imada.sdu.dk}
}

\date{Received: date / Accepted: date}

\maketitle

\begin{abstract}
This paper is devoted to the online dominating set problem and its
variants.
We believe the paper represents the first systematic study of
the effect of two limitations of online algorithms:
making irrevocable decisions while not knowing the future,
and being incremental,
i.e., having to maintain solutions to all prefixes of the input.
This is quantified through competitive analyses of online algorithms
against two optimal algorithms, both knowing the entire input,
but only one having to be incremental.
We also consider the competitive ratio of the weaker of the two optimal
algorithms against the other.

We consider important graph classes,
distinguishing between connected and not necessarily connected graphs.
For the classic graph classes of trees, bipartite, planar, and general
graphs, we obtain tight results in almost all cases.
We also derive upper and lower bounds for the class of bounded-degree graphs.
From these analyses, we get detailed information regarding the 
significance of the necessary requirement that online algorithms be
incremental. In some cases, having to be incremental
fully accounts for the online algorithm's disadvantage.
\end{abstract}

\section{Introduction}
We consider online versions of a number of NP-complete graph problems,
\emph{Dominating Set} (\ds), and variants hereof.
Given an undirected graph $G=(V,E)$ with vertex set~$V$ and edge set~$E$,
a set $D\subseteq V$ is a \emph{dominating set} for~$G$ if for all
vertices $u\in V$, either $u\in D$ (containment) or there exists an edge
$\EDGE{u}{v}\in E$, where $v\in D$ (dominance).
The objective is to find a dominating set of minimum cardinality.

In the variant \emph{Connected} Dominating Set (\cds), we add the requirement
that $D$ be connected (if $G$ is not connected, $D$ should be connected
for each connected component of~$G$).
In the variant \emph{Total} Dominating Set (\tds),
every vertex must be dominated
by another, corresponding to the definition above with the
``containment'' option removed.
We also consider \emph{Independent} Dominating Set (\ids), where we add
the requirement that $D$ be independent, i.e., if $\EDGE{u}{v}\in E$,
then $\SET{u,v}\not\subseteq D$.
In both this introduction and the preliminaries section, 
when we refer to Dominating Set, the statements 
are relevant to all the variants unless explicitly specified otherwise.

The study of Dominating Set and its variants dates back at least to
seminal books by K\"{o}nig~\cite{K50}, Berge~\cite{B62}, and Ore~\cite{O62}.
The concept of domination readily lends itself to
modeling many conceivable practical problems.
Indeed, at the onset of the field,
Berge~\cite{B62} mentions a possible application of keeping all points in a
network under surveillance by a set of radar stations, and Liu~\cite{L68}
notes that the vertices in a dominating set can be thought
of as transmitting stations that can transmit messages to all stations
in the network. Several monographs are devoted to domination~\cite{HHS98},
total domination~\cite{HY13}, and connected domination~\cite{DW13}, and we
refer the reader to these for further details.

We consider \emph{online}~\cite{BE98b} versions of these problems.
More specifically, we consider the vertex-arrival model where the vertices
of the
graph arrive one at a time and with each vertex, the edges connecting
it to previous vertices are also given. 
If the online algorithm decides to include a vertex in the set~$D$,
this decision is irrevocable.
Note, however, that not just a new vertex but also vertices given
previously may be added to~$D$ at any time.
An online algorithm must make this decision without
any knowledge about possible future vertices.
Note that, since an online algorithm does not know the size of the
input graph, it has to maintain a feasible solution at any time.
Since the graph consisting of 
a single vertex does not have a total dominating set at all and
isolated vertices do not dominate any vertices, we allow an online 
algorithm for \tds to \emph{not} include isolated vertices in the solution, 
unlike the other variants of \ds.

Defining the nature of the irrevocable decisions is a
modeling issue, and one could alternatively have made the decision
that also the act of \emph{not} including the new vertex in~$D$ should
be irrevocable, i.e., not allowing algorithms to include already given
vertices in~$D$ at a later time.  The main reason for our choice of
model is that it is much better suited for applications such as
routing in wireless networks for which domination is intensively
studied; see for instance~\cite{DB97} and the citations
thereof. Indeed, when domination models a (costly) establishment of
some service, there is no reason why \emph{not} establishing a service
at a given time should have any inherent costs or consequences, such
as preventing one from doing so later.  Furthermore, the stricter
variant of irrevocability results in a problem for which it becomes
next to impossible for an online algorithm to obtain a non-trivial
result in comparison with an optimal offline algorithm.
Consider, for example, an instance
where the adversary starts by giving a vertex followed by a number
of neighbors of that vertex.  If the algorithm ever rejects one of
these neighbors, the remaining part of the sequence will consist of
neighbors of the rejected vertex and the neighbors must all be selected.  
This shows that, using this model
of irrevocability, online algorithms for \ds or \tds would have to
select at least $n-1$ vertices, while the optimal offline algorithm
selects at most two.
For \cds it is even worse, since rejecting any vertex
could result in a disconnected dominating set.
A similar observation is made in~\cite{KT97} for this model, though they
 focus more on a different model, where the vertices are known in
advance, and all edges incident to a particular vertex are presented
when that vertex arrives.

An online algorithm can be seen as having two characteristics:
it maintains a feasible solution at any time, and it has no knowledge
about future requests.
The first is a consequence of the algorithm not knowing the length of
the sequence.
We also define a larger class of algorithms:
An {\em \incr} algorithm is an algorithm that maintains a feasible
solution at any time.
It may or may not know the whole input from the beginning.

We analyze the quality of online algorithms for the dominating set problems
using \emph{competitive analysis}~\cite{ST85j,KMRS88j}.
Thus, we consider the size of the dominating set an online algorithm
computes up against the result obtained by an optimal offline algorithm, \opt.

As something a little unusual in competitive analysis, we are working
with two different optimal algorithms. This is with the aim of
investigating whether it is predominantly the requirement to maintain
feasible solutions or the lack of knowledge of the future which makes
the problem hard.  Thus, we define \onopt to be an \emph{optimal \incr}
algorithm and \offopt to be an \emph{optimal offline}
algorithm, i.e., it is given the entire input, and then produces a
dominating set for the whole graph. The reason for this distinction is
that in order to properly measure the impact of the knowledge of the
future, it is necessary that it is the sole difference between the
algorithm and \opt. Therefore, \opt has to solve the same problem and
hence the restriction on \onopt.  While such an attention to comparing
algorithms to an appropriate \opt already exists in the literature, to
the best of our knowledge the focus also on the comparison of
different optimum algorithms is a novel aspect of our work. Previous
results requiring the optimal offline algorithm to solve the same
problem as the online algorithm include \cite{BL99} which
considers \emph{fair} algorithms that have to accept a request
whenever possible, and thus require \opt to be fair as well,
\cite{ChSW11} which studies \emph{$k$-bounded-space} algorithms for
bin packing that have at most $k$ open bins at any time and requires
\opt to also adhere to this restriction, and \cite{BSV15} which
analyzes the performance of online algorithms for a variant of bin
packing against a \emph{restricted offline optimum} algorithm that
knows the future, but has to process the requests in the same order as
the algorithm under consideration.

Given an input sequence $I$ and an algorithm \ALG,
we let $\ALG(I)$ denote the size of the
dominating set computed by \ALG on $I$.
Then \ALG is
\emph{$c$-competitive} if there exists a constant $\alpha$ such that
for all input sequences $I$, $\ALG(I)\leq c\opt(I)+\alpha$,
where \opt may be \onopt or \offopt, depending on the context.
The (asymptotic) \emph{competitive ratio} of \ALG is the infimum
over all such~$c$ and we denote this $\cron(\ALG)$
and $\croff(\ALG)$, respectively.
If the inequality above holds without the additive constant $\alpha$, the
algorithm is said to be {\em strictly $c$-competitive}, and the 
 {\em strict competitive ratio} is the infimum over all such~$c$.
When considering competitive ratios that are linear in the input size,
$n$, we will use the strict competitive ratio.
This is mainly to avoid technicalities arising from the fact that if
an algorithm is $n/a$-competitive for some constant $a$, then it is
also $(n/a-b)$-competitive for any constant $b$.

We consider the four dominating set problem variants
on various graph types, including trees, bipartite, and general graphs
and to some extent planar graphs, obtaining tight results
in almost all cases.
We also consider graphs of bounded degree, giving upper and lower
bounds as a function of the maximum degree, $\Delta$.
In all cases, we also consider the online variant where the
adversary is restricted to giving the vertices in such a manner
that the graph given at any point in time is connected.
In this case, the graph is called {\em always-connected}.
One motivation is that graphs in applications such as routing in 
networks are most often connected.

The results for online algorithms are summarized in 
Tables~\ref{table:any-onopt} and~\ref{table:any-offopt}. The strict upper
bound on the competitive ratio against \onopt for general graphs is
$\frac{n+3}{4}$. Note that for this, and other strict competitive
ratios containing~$n$,
we ignore the additive constant (in the table), writing $n/4$ in this case.
The results for \onopt against \offopt are identical to the results of
Table~\ref{table:any-offopt}, except that for \ds on trees,
$\croff(\onopt)=2$, for \ds on always-connected planar graphs,
$\croff(\onopt) = n/2$, and for always-connected bounded-degree graphs, the lower bound that we prove is $\croff(\onopt) \geq (\Delta-1)/2$.
The results are discussed in the conclusion.

\begin{table}
\begin{tabular}{|l||c|c|c|c|}
\hline
Graph class & DS & CDS & TDS & IDS \\
\hline\hline
Trees & 2 & \multicolumn{2}{c|}{$1$} & \multirow{6}{*}{1}\\ 
\hhline{|----|~|}
Bipartite & \multicolumn{2}{c|}{\multirow{2}{*}{$n/4$}} & $n/4$ & \\
\hhline{|-|~~-|~|}
Always-connected bipartite & \multicolumn{2}{c|}{} & $[n/6;2n/9]$ & \\
\hhline{|----|~|}
Bounded-degree & \multirow{2}{*}{$[\Delta/2-1/4; \Delta]$} & $[\Delta/2; \Delta+1]$ &
$[\Delta/2; \Delta]$ & \\ 
\hhline{|-|~|-|-|~|}
Always-conn.\ bounded-degree & & $[\Delta/3; \Delta -1]$ & $[\Delta/3; \Delta]$ & \\
\hhline{|-|-|-|-|~|}
General graphs & \multicolumn{3}{c|}{$n/4$}  & \\
\hline
\end{tabular}
\caption{Bounds on the competitive ratio of any online algorithm
with respect to \onopt.\label{table:any-onopt}}
\end{table}

\begin{table}
\begin{tabular}{|l||c|c|c|c|}
\hline
Graph class & DS & CDS & TDS & IDS \\ 
\hline\hline
Trees & $[2;3]$ & $1$ & 2 & \multirow{3}{*}{$n$}\\ 
\hhline{|----|~|}
Bipartite & \multicolumn{2}{c|}{$n$} & \multirow{2}{*}{$n/2$} & \\
\hhline{|---~|~|}
Always-connected bipartite &    \multicolumn{2}{c|}{$n/2$} & &  \\
\hline
Bounded-degree & $\Delta$ & $\Delta +1$ & \multirow{2}{*}{$[\Delta -1; \Delta]$}  &$\Delta$ \\
\hhline{|---|~|-|}
Always-conn.\ bounded-degree &$[\Delta-2;\Delta]$ & $[\Delta-2;\Delta -1]$ &  &$[\Delta -1; \Delta]$ \\
\hline
Planar & \multicolumn{2}{c|}{\multirow{2}{*}{$n$}} &\multirow{2}{*}{$n/2$} & \multirow{2}{*}{$n$}\\
\hhline{|-|~~|~|~|}
Always-connected planar & \multicolumn{2}{c|}{}  & & \\
\hline
\end{tabular}
\caption{Bounds on the competitive ratio of  any online algorithm  with respect to \offopt.\label{table:any-offopt}}
\end{table}

\section{Preliminaries}
Since we are studying online problems, the order in which vertices
are given is important. We assume throughout the paper that
the indices of the vertices of $G$, $v_1,\ldots, v_n$,
indicate the order in which they are given to the online algorithm,
and we use $\ALG(G)$ to denote the size of the dominating set
computed by \ALG using this ordering.
When no confusion can occur, we implicitly assume that the dominating
set being constructed by an online algorithm $\ALG$ is denoted by $D$. 
We use the phrase \emph{select a vertex} to mean that the vertex
in question is added to the dominating set in question.
We use $G_i$ to denote the subgraph of $G$ induced by $\SET{v_1, \ldots, v_i}$.
We let $D_i$ denote
the dominating set constructed by $\ALG$ after processing the first
$i$ vertices of the input.
When no confusion can occur, we 
sometimes implicitly 
identify a dominating set $D$ and the subgraph it induces. 
For example, we may say that
\emph{$D$ has $k$ components} or \emph{is connected},
meaning that the subgraph of~$G$
induced by~$D$ has $k$ components or is connected, respectively.

Online algorithms must compute a solution for all prefixes of
the input seen by the algorithm, since the input could terminate at
any point. Given the irrevocable decisions,
this can of course affect the possible final sizes of a dominating
set. When we want to emphasize that a bound is derived under this
restriction, we use the word \emph{\incr} to indicate this,
i.e., if we discuss the size of an \incr dominating set $D$ of $G$,
this means that 
$D_1\subseteq D_2\subseteq \cdots \subseteq D_n=D$ and that 
$D_i$ is a dominating set of $G_i$ for each $i$.
Note in particular that any \incr algorithm, including \onopt, for
\ds, \cds, or \ids must select the first vertex.

Throughout the text, we use standard graph-theoretic notation.
In particular, the \emph{path on~$n$ vertices} is denoted~$P_n$.
A {\em star} with $n$~vertices is the complete bipartite graph~$K_{1,n-1}$.
In a rooted tree, an {\em internal} vertex is a vertex that has at
least one child vertex.
For a vertex $v$, $N(v)$ denotes the set of neighbors of $v$.
We  use $\cc(G)$ to denote the number of components of a graph $G$.
The size of a minimum
dominating set of a graph $G$ is denoted by~$\gamma(G)$.
We use indices to
indicate variants, using $\gammaC(G)$, $\gammaT(G)$, and $\gammaI(G)$
for Connected, Total, and Independent Dominating Set, respectively.
This is an alternative notation for the size computed by \offopt.
We also use these indices on $\onopt$ to indicate which
variant is under consideration.
Sometimes, when the problem considered is clear from the context or
we consider more problems at the same time, we may omit the index.
We use $\Delta$ to denote the maximum degree of the graph under
consideration.
Similarly, we always let $n$ denote the number of vertices in the
graph.

In many of the proofs of lower bounds on the competitive ratio, 
when the path, $P_n$, is considered, either as the entire input
or as a subgraph of the input,
we assume that it is given in the \emph{standard order}, the order
where the first vertex given is one of the two endpoints, and each
subsequent vertex 
is a neighbor of the vertex given in the previous step.
When the path is a subgraph of the input graph, we often
extend this standard order of the path to
an \emph{adversarial order} of the input graph -- a fixed ordering of the 
vertices that yields an input attaining the bound.
Typically, the adversarial order consists of a path in the 
standard order, followed by one or more high-degree vertices
off the path.

In some online settings,
we are interested in connected graphs, where the vertices are given
in an order such that the subgraph induced at any point
in time is connected.
In this case, we use the term
\emph{always-connected},
indicating that we are considering a connected graph~$G$, and
all the partial graphs~$G_i$ are connected.
We implicitly assume that trees are always-connected and we drop the adjective.
Since all the classes we consider are hereditary (that is, any induced subgraph
also belongs to the class), no further restriction of partial inputs
$G_i$ is necessary. In particular, these conventions 
imply that for trees, the vertex arriving at any step (except the first) is
connected to exactly one of the vertices given previously, and since we
consider unrooted trees, we can think of that vertex as the \emph{parent} of the
new vertex.

\section{The Cost of Being Online}

In this section, we analyze the the performance of online algorithms
for the four variants of Dominating Set.
We compare the algorithms to \onopt, thus comparing algorithms restricted
to making the same irrevocable decisions, and thereby
investigating the role played by the
(absence of) knowledge of the future.
We also compare the online algorithms to \offopt.

We start with Independent Dominating Set.
\begin{proposition}
For any graph~$G$, there is a unique \incr independent dominating set.
\end{proposition}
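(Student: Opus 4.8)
The plan is to fix the vertex ordering $v_1, \dots, v_n$ and prove by induction on $i$ that the chain $D_1 \subseteq \dots \subseteq D_i$ making up an \incr independent dominating set of $G_i$ is uniquely determined; the claim then follows by taking $i = n$. The base case is immediate: $G_1$ consists of the single vertex $v_1$, and its only independent dominating set is $\{v_1\}$, so necessarily $D_1 = \{v_1\}$ (consistent with the general fact, noted in the preliminaries, that an \incr algorithm for \ids must select the first vertex).

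For the inductive step, assume $D_i$ is the unique \incr independent dominating set of $G_i$, and let $D_{i+1} \supseteq D_i$ be an arbitrary independent dominating set of $G_{i+1}$. The first step is to show that $D_{i+1} \subseteq D_i \cup \{v_{i+1}\}$, i.e.\ that no previously given vertex outside $D_i$ can be added. Indeed, for any $j \le i$ with $v_j \notin D_i$, the set $D_i$ dominates $G_i$, so $v_j$ has a neighbour in $D_i \subseteq D_{i+1}$; hence $v_j \in D_{i+1}$ would violate independence of $D_{i+1}$. So the only possibilities are $D_{i+1} = D_i$ and $D_{i+1} = D_i \cup \{v_{i+1}\}$. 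Now distinguish two cases according to whether $v_{i+1}$ has a neighbour in $D_i$. If it does, then $D_i$ is already an independent dominating set of $G_{i+1}$, while $D_i \cup \{v_{i+1}\}$ is not independent, forcing $D_{i+1} = D_i$. If $v_{i+1}$ has no neighbour in $D_i$, then $D_i$ fails to dominate $v_{i+1}$ in $G_{i+1}$, whereas $D_i \cup \{v_{i+1}\}$ is independent (all neighbours of $v_{i+1}$ lie among $v_1,\dots,v_i$ and none is in $D_i$) and dominates $G_{i+1}$, forcing $D_{i+1} = D_i \cup \{v_{i+1}\}$. In both cases $D_{i+1}$ is uniquely determined, which completes the induction.

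There is no real obstacle in this argument; the only point needing care is the reduction step showing that $D_{i+1}$ can differ from $D_i$ only by $v_{i+1}$, which is precisely the observation that an independent dominating set is a maximal independent set and that every not-yet-selected old vertex is already ``blocked'' by a neighbour in $D_i$. It is worth remarking that the proof actually pins down the behaviour of \emph{every} \incr algorithm for \ids, in particular \onopt: each must execute the forced greedy process above, adding $v_{i+1}$ exactly when it is not yet dominated.
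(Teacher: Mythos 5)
Your proof is correct and takes essentially the same route as the paper's: a forced-greedy induction showing that at each step $v_{i+1}$ must be selected exactly when it is not yet dominated by $D_i$. Your explicit check that $D_{i+1}\subseteq D_i\cup\{v_{i+1}\}$ (no earlier undominated-into-dominated vertex can be added later) is a slightly more careful rendering of a step the paper leaves implicit, but it is not a different argument.
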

\begin{proof}
We fix $G$ and proceed inductively.
The first vertex has to be selected due to the online requirement.
When the next vertex, $v_{i+1}$, is given, if it is dominated by a vertex 
in~$D_i$,
it cannot be selected, since then $D_{i+1}$ would not be independent.
If $v_{i+1}$ is not dominated by a vertex in~$D_i$, then
$v_{i+1}$ or one of its neighbors must be selected.
However, none of $v_{i+1}$'s neighbors can be selected, since if they
were not selected already, then they are dominated, and selecting one
of them would violate the independence criteria.
Thus, $v_{i+1}$ must be selected.
In either case, $D_{i+1}$ is uniquely defined. 
\qed\end{proof}

Since a correct incremental algorithm is uniquely defined by this proposition
by a forced move in every step, \onopt must behave exactly the same.
This fills the column for Independent Dominating Set
in~Table~\ref{table:any-onopt}.

For Dominating Set, Connected Dominating Set, and Total Dominating
Set, we start by using the size of a given dominating set to bound the
sizes of some connected or incremental equivalents. 
The following theorem does not address TDS directly, but in many
cases, it can be applied to this problem as well, since any connected
dominating set including more than one vertex in each connected
component is a total dominating set.

\begin{theorem}
\label{thm:cds}
Let $G$ be always-connected,
let $S$ be a dominating set of $G$,
and let $R$ be an \incr dominating set of $G$.
Then the following hold:
\begin{enumerate}[(i)]
\item \label{thm:cds:c} There is a connected dominating set $S'$ of $G$
such that  $|S'| \le |S| + 2(\cc(S)-1)$.
\item \label{thm:cds:ic} There is an 
\incr connected dominating set $R'$
of $G$ such that $|R'| \le |R| + \cc(R) -1$. 
\item \label{thm:cds:i} If $G$ is a tree, there is an \incr
dominating set $R''$ of $G$ such that $|R''| \le |S| + \cc(S)$.
\end{enumerate}
Moreover, all three bounds are tight for infinitely many graphs.
\end{theorem}
\begin{proof}
To obtain the upper bound of~(\ref{thm:cds:c}), we argue that by selecting
additionally at most $2(\cc(S)-1)$ vertices, we can connect all the
components in $S$. We do this inductively.  If there are two
components that can be connected by a path of at most two unselected vertices,
we select all the vertices on this path and continue
inductively. Otherwise, assume to the contrary that all pairs of
components require the selection of at least three vertices to become
connected.  We choose a shortest such path of length~$k$ consisting
of vertices $u_1,\ldots, u_k$, where $u_i$ is dominated by a
component $C_i$ for all $i$.  If $C_1 \not= C_2$, we can connect
them by selecting $u_1$ and $u_2$, which would be a contradiction.
If $C_1=C_2$, then we have found a shorter path between $C_1$ and
$C_k$; also a contradiction.  We conclude that $|S'| \le |S| + 2(\cc(S) -1)$,
which proves~(\ref{thm:cds:c}).

To see that the bound is tight, consider a path $P_n$ in the standard order,
where $n\equiv 0 \pmod 3$. Clearly, the size of a minimum dominating
set $S$ of $P_n$ 
is $n/3$ and $\cc(S) = n/3$. On the other hand,
the size of any minimum connected dominating set of $P_n$ is $n-2$
and $n-2 = |S| + 2(\cc(S) - 1)$.

To prove~(\ref{thm:cds:ic}), we label the components of~$R$
in the order in which their first vertices
arrive. Thus, let $C_1, \ldots, C_k$ be the components of~$R$,
and, for $1 \leq i \leq k$, let
$v_{j_i}$ be the first vertex of~$C_i$ that arrives. Note that we assume that
$v_{j_i}$ arrives before $v_{j_{i+1}}$ for each $i = 1,\ldots, k-1$.
We prove that for each component~$C_i$ of~$R$,
there is a path of length~$2$
joining $v_{j_i}$ with $C_h$~in $G_{j_i}$ for some $h < i$,
i.e., a path with only one vertex not belonging to either component.
Let $P = v_{l_1}, \ldots, v_{l_m}, v_{j_i}$ be a
shortest path in~$G_{j_i}$ connecting $v_{j_i}$ and some component
$C_h$, $h < i$, and assume for the sake of contradiction that $m \geq 3$.
In $G_{j_i}$, the vertex~$v_{l_3}$ is not adjacent to a vertex in any
component~$C_{h'}$, 
where $h' < i$, since in that case a shorter path would exist.
However, since vertices cannot be unselected as the online algorithm proceeds,
it follows that in~$G_{l_3}$, $v_{l_3}$ is not dominated by any vertex,
which is a contradiction.
Thus, $m\le 2$ and selecting just one additional vertex at the arrival of $v_{i_j}$
connects $C_i$ to an earlier component,
and the result follows inductively.
 
To see that the bound is tight, observe that the optimal incremental
connected dominating set 
of $P_n$ has $n-1$ vertices, while for even $n$, there is an
incremental dominating set of size $n/2$ with $n/2$ components.

To obtain~(\ref{thm:cds:i}),
consider an algorithm $\ALG$ processing vertices greedily,
while always selecting all 
vertices from $S$. That is, $v_1$ and all vertices of $S$ are always selected,
and when a vertex $v \not\in S$ arrives, it is selected
 if and only if it is not dominated by already selected vertices, in which case
it is called a \emph{bad} vertex.
Clearly, \ALG produces an \incr dominating set, $R''$, of $G$. 

To prove the upper bound on $|R''|$, we gradually mark components of $S$.
For a bad vertex $v_i$, let $v$ be
a vertex from $S$ dominating $v_i$, and let $C$ be the component of $S$
containing $v$.
Mark $C$.
To prove the claim it suffices to show that each component of $S$ can
be marked at most once, since each bad vertex leads to some component
of $S$ being marked.

Assume for the sake of contradiction that some component, $C$, of
$S$ is marked twice.
This happens because a vertex $v$ of $C$ is adjacent to a bad vertex $b$,
and a vertex $v'$ (not necessarily different from $v$) of $C$ is
adjacent to some later bad vertex $b'$. 
Since $G$ is always-connected and $b'$ was bad, 
$b$ and $b'$ are connected by a path not
including $v'$.
Furthermore, $v$ and $v'$ are connected by a path in $C$. Thus, the
edges $\SET{b,v}$ and 
$\SET{b',v'}$ imply the existence of a cycle in $G$,
contradicting the fact that it is a tree.

To see that the bound is tight, let $v_1,\ldots, v_m$, $m \equiv 2 \pmod 6$,
be a path in the standard order.  Let $G$ be obtained from
$P_m$ by attaching $m$ pendant vertices (new vertices of degree~$1$) to
each of the vertices $v_2, v_5, v_8,\ldots, v_m$, where the pendant
vertices arrive in arbitrary order, though respecting that
$G$ should be always-connected.
Each minimum incremental dominating set of
$G$ contains each of the vertices $v_2, v_5, v_8,\ldots, v_m$,
the vertex $v_1$, and one of the vertices $v_{3i}$ and $v_{3i+1}$ for
each $i$, and thus it has size $2(m+1)/3$.  On the other hand, the
vertices $v_2, v_5, v_8,\ldots, v_m$ form a dominating set $S$ of $G$
with $\cc(S) = (m+1)/3$.
\qed\end{proof}

Theorem~\ref{thm:cds} is best possible
in the sense that none of the assumptions can be omitted.
Indeed, Proposition~\ref{prop:ds-onopt} implies that it is not even
possible to bound
the size of an \incr (connected) dominating set in terms of the size
of a (connected) dominating set, much less
to bound the size of an \incr connected dominating set in terms of the
size of a dominating set. 
Therefore, (\ref{thm:cds:c})~and (\ref{thm:cds:ic})~in Theorem~\ref{thm:cds}
cannot be combined even on bipartite planar graphs.
The situation is different for trees:
Proposition~\ref{prop:parent}~(\ref{parent:one}) essentially leverages the fact
that any connected dominating set $D$ on a tree can
be produced by an \incr algorithm without increasing the size of $D$.

\subsection{Trees}

For \ds and \cds, we let \aparent denote the following algorithm for trees.  The
algorithm selects the first vertex.  When a new vertex~$v$ arrives,
if $v$ is not already dominated by a previously arrived vertex,
then the parent vertex that $v$ is adjacent to is added to the
dominating set.
Note that \aparent accepts all internal nodes of the tree rooted
at the first vertex, creating an incremental connected dominating set.
For \cds on trees, \aparent is $1$-competitive,
even against \offopt:

\begin{lemma}
\label{lemma:cds-trees-alg}
For \cds on any tree~$T$, 
$$
\aparent(T) \leq \left\{ 
\begin{array}{ll} 
\gamma_C(T) + 1, & \text{if } v_1 \text{ has degree $1$ in } T \\
\gamma_C(T), & \text{otherwise}.
\end{array}
\right.
$$
\end{lemma}
\begin{proof}
For \cds, \aparent selects no vertices of degree~$1$, except possibly $v_1$.
Thus, the algorithm selects all vertices of degree at least~$2$ plus at most one
vertex of degree~$0$ or~$1$.

For trees with at most two vertices, the minimum size of a connected dominating
set is~$1$.
For trees with more than two vertices, the minimum size of a
connected dominating set of any tree $T$ equals the number of vertices
with degree at least~$2$.
\qed\end{proof}

For \tds, \aparent is the same as for \ds and \cds, except that it 
 selects $v_1$ only if $v_2$ arrives, in which case it selects both $v_1$ and $v_2$.
Thus, \aparent for \tds selects at most one more vertex than \aparent
for \ds and \cds.
To show that for \tds on trees,
\aparent is $1$-competitive against \onopt,
we prove the following:

\begin{lemma}
\label{lemma:tds-is-connected}
For any \incr total dominating set $D$ for an always\--con\-nect\-ed graph~$G$,
all $D_i$ are connected.
\end{lemma}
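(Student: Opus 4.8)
The plan is to prove the statement by induction on~$i$, establishing that every~$D_i$ is connected. The fact I would exploit is that, since $G$ is always-connected, $G_i$ is connected for every~$i$, so for $i \geq 2$ the graph $G_i$ has at least two vertices and hence no isolated vertex; consequently, because $D_i$ is a total dominating set of~$G_i$, \emph{every} vertex of~$G_i$ --- whether or not it lies in~$D_i$ --- has a neighbor in~$D_i$. (For the degenerate initial segment, recall that a \tds-solution is allowed not to contain isolated vertices, which is exactly why the first step must be handled apart from the induction.)

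For the base cases, $D_1 \subseteq \SET{v_1}$ is trivially connected, and $G_2$ is a single edge, whose only total dominating set is $\SET{v_1,v_2}$, which is connected. For the inductive step I would assume $D_i$ is connected for some $i \geq 2$ and examine the vertices of $D_{i+1}\setminus D_i$; the subtlety to keep in mind is that this set may contain not only~$v_{i+1}$ but also previously given vertices, since the incremental model permits adding old vertices at any time. Take $x \in D_{i+1}\setminus D_i$. If $x \neq v_{i+1}$, then $x$ is a vertex of~$G_i$, not isolated there, so by the fact above it has a neighbor in~$D_i$, placing~$x$ in the same component of the subgraph induced by~$D_{i+1}$ as~$D_i$. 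If $x = v_{i+1}$, then since $v_{i+1} \in D_{i+1}$ and $D_{i+1}$ totally dominates~$G_{i+1}$, the vertex~$v_{i+1}$ has a neighbor $y \in D_{i+1}$; as all neighbors of~$v_{i+1}$ lie among $v_1,\dots,v_i$, either $y \in D_i$, or $y \in D_{i+1}\setminus D_i$ with $y \neq v_{i+1}$, in which case the previous case already puts~$y$ in the component of~$D_i$. Either way $v_{i+1}$ is attached to that component. Hence every vertex of~$D_{i+1}$ lies in the component containing~$D_i$, and since $D_i$ is connected, so is~$D_{i+1}$.

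I do not expect a genuine obstacle here; the only point requiring care is to track which feasibility condition is invoked where --- $D_i$ totally dominating~$G_i$ for the previously given vertices that get added, versus $D_{i+1}$ totally dominating~$G_{i+1}$ for the vertex~$v_{i+1}$ itself --- together with the convention on isolated vertices for \tds, which forces the separate treatment of $i=1$ (and, strictly, of $i=2$).
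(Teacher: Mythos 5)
Your proof is correct and is essentially the paper's argument: the paper runs the same reasoning as a minimal-counterexample contradiction, concluding that the only possible new component of $D_i$ is the singleton $\SET{v_i}$, which then violates total domination. Your inductive version is, if anything, more explicit than the paper's in justifying why a previously given vertex added to the solution cannot start a new component (it is already totally dominated by $D_i$ since $G_i$ is connected), a step the paper leaves implicit.
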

\begin{proof}
For the sake of a contradiction, suppose that for some~$i$, 
$D_i$ is not connected, and let
$i$ be the smallest index with this property. It follows that the vertex
$v_i$ constitutes a singleton component of 
$D_i$. Thus, $v_i$ cannot be dominated by any other vertex of~$D_i$,
contradicting that the solution is \incr.
\qed\end{proof}

\begin{lemma}
\label{lemma:tds-trees-alg}
For \tds on any tree~$T$, $\aparent(T) = \onopt_T(T)$.
\end{lemma}
\begin{proof}
If $T$ consists of only one vertex, $\aparent(T) = \onopt_T(T) = 0$.
Otherwise, \aparent selects $v_1$, $v_2$, and all later internal
vertices.
\onopt also selects $v_1$ and $v_2$, and by
Lemma~\ref{lemma:tds-is-connected}, it has to select all internal
vertices.
Thus, the two algorithms select exactly the same set of vertices.
\qed\end{proof}

\begin{lemma}
\label{lemma:whole-tree}
For any online algorithm \A for \ds or \cds, there exist arbitrarily
large trees $T$, such that $\A(T) \geq n-1$.
\end{lemma}
\begin{proof}
We prove that the adversary can construct an arbitrarily large tree,
maintaining the invariant that at most 
one vertex is not included in the solution of \A.  The algorithm has
to select the first vertex, so the invariant holds initially.
When presenting a new vertex~$v_{i}$, the adversary
checks whether all vertices given so far are included in \A's
solution.  If this is the case, $v_{i}$ is connected to an arbitrary
vertex, and the invariant still holds.  Otherwise, $v_{i}$ is
connected to the unique vertex not included in $D_{i-1}$.
Now $v_{i}$ is not dominated, so \A must select an additional vertex.
\qed\end{proof}

\begin{proposition}
\label{prop:dsonlower}
For any online algorithm \A for \ds on trees, $\cron(\A) \ge 2$.
\end{proposition}
\begin{proof}
We argue that, for any always-connected bipartite graph, $G$,
we have that $\onopt(G) \leq \frac{n+1}{2}$.
Since trees are bipartite, the result then follows from Lemma~\ref{lemma:whole-tree}.
The smaller partite set~$S$ of any connected bipartite
graph $G$ is a dominating set of~$G$.
If the first presented vertex~$v_1$ belongs to~$S$, then $S$ is
an \incr dominating set of~$G$. Otherwise, $S\cup\SET{v_1}$ is an
\incr dominating set of~$G$.
\qed\end{proof}

The adversary strategy used in the proof of
Lemma~\ref{lemma:whole-tree} cannot give a
lower bound larger than~$2$ against \offopt, since the resulting tree
may not have any dominating set with fewer than $n/2$ vertices. 
Consider, for example, a caterpillar graph where each vertex of the
central path has exactly one neighbor not belonging to the central
path.

The following proposition concludes on the results for \ds, \cds, and
\tds on trees.

\begin{proposition}
\label{prop:parent}
For trees, the following hold.
\begin{enumerate}[(i)]
\item \label{parent:one}
For \ds, $\cron(\aparent) = 2$ \, and \, $\croff(\aparent) = 3$.
\item \label{parent:two}
For \cds, $\cron(\aparent) = \croff(\aparent) = 1$. 
\item \label{parent:three}
For \tds, $\cron(\aparent)=1$ \, and \, $\croff(\aparent)=2$.
\end{enumerate}
\end{proposition}
\begin{proof}
We prove~(\ref{parent:one}) first.
The lower bound on $\cron(\aparent)$ follows directly from
Proposition~\ref{prop:dsonlower}.
For the corresponding upper bound, note that
Lemma~\ref{lemma:cds-trees-alg} in combination with
Theorem~\ref{thm:cds}(\ref{thm:cds:ic}) 
imply that 
$\aparent(T) \leq \gamma_C(T) +1 \leq 2 \cdot \onopt(T)$,
for any tree $T$. 
The result on $\croff(\aparent)$ follows from
Theorem~\ref{thm:cds}(\ref{thm:cds:c}) and the proof that
Theorem~\ref{thm:cds}(\ref{thm:cds:c}) is tight. 

Item~(\ref{parent:two}) follows directly from Lemma~\ref{lemma:cds-trees-alg}.

In item~(\ref{parent:three}),
the result on $\cron(\aparent)$ follows directly from
Lemma~\ref{lemma:tds-trees-alg}.

For the upper bound on $\croff(\aparent)$, let $S$ be an optimal total
dominating set for a tree $T$. Assume that $|S| \geq 3$ and consider
the following calculations which we argue for below.
\begin{align*}
\aparent(T) 
& =    \onopt_T(T), \text{ by Lemma~\ref{lemma:tds-trees-alg}}\\
& \leq \onopt_C(T)\\ 
& \leq \offopt_C(T)+1\\ 
& \leq |S| + 2(c(S)-1) +1, \text{by
  Theorem~\ref{thm:cds}(\ref{thm:cds:c})}\\
& \leq 2|S|-1 
\end{align*}
The first inequality in the calculations above follows from the fact
that any connected dominating set of size at least $2$ is a total
dominating set, and since we assumed that an optimal total dominating set
for $T$ has at least three vertices, any connected dominating set for
$T$ must have at least two vertices.

The second inequality follows from Lemma~\ref{lemma:cds-trees-alg},
since \aparent is an incremental algorithm.

The last inequality follows from the fact that any connected component
in a total dominating set has at least two vertices.

For the lower bound on $\croff(\aparent)$, consider a path $v_1,
v_2,\ldots,v_{4n}$ for a positive integer $n$.
When given in the standard order, \aparent will select the first $4n-1$
vertices, whereas an optimal total dominating set is the set $\{v_{4i+2},
v_{4i+3} \mid 0 \leq i \leq n-1\}$ of size $2n$.
\qed\end{proof}

\subsection{Bipartite, bounded-degree, and general graphs}

We extend the \aparent algorithm for graphs that are not trees as
follows. When a vertex $v_i$, $i>1$, arrives, which is not
already dominated by one of the previously presented vertices,
\aparent selects \emph{any} of the neighbors of $v_i$ in $G_i$.
Again, it is easily seen that \aparent creates an incremental
connected dominating set.
We start with a few positive results for \aparent.

\begin{proposition}
\label{prop:cds-bip}
The following hold.
\begin{enumerate}[(i)]
\item \label{off-acbip-dscds}
 For \ds and \cds on always-connected bipartite graphs, for $n\geq 4$,
 $$\croff(\aparent)\leq n/2.$$
\item \label{off-ac-dscds}
 For \ds and \cds on always-connected graphs, for $n\geq 4$,
$$\croff(\aparent) \le n-2.$$
\item \label{off-tds}
 For \tds, $\croff(\aparent)\leq n/2$.
\end{enumerate}
\end{proposition}
\begin{proof}
For item~(\ref{off-acbip-dscds}), if $\gamma(G) \ge 2$, there is
nothing to prove. 
Therefore, we assume that there is a single vertex~$v$
adjacent to every other vertex. Since $G$ is bipartite, there is no 
edge between any of the vertices adjacent to~$v$, so $G$ is a star.
Since $G_i$ is connected for each $i$, the vertex $v$ arrives either
as the first or the second vertex. Furthermore, if another vertex arrives 
after $v$, then $v$ is selected by \aparent. Once $v$ is selected,
all future vertices are already dominated by $v$, so no
more vertices are selected, implying that $\aparent(G)\le 2$,
which concludes the proof. 

For item~(\ref{off-ac-dscds}), we  only need to consider the case of
$\gamma(G) = 1$,  since otherwise there is 
nothing to prove, and thus there is a vertex $v$ adjacent
to every other vertex of $G$.
Since after the arrival of any vertex, \aparent increases
the size of the dominating set by 
at most one, it suffices to prove that, immediately after some vertex has
been processed, there are two vertices not selected by \aparent.
First note that once $v$ is selected, \aparent does not select
any other vertex and thus we can assume that $v$ is not the first vertex.
Suppose that $v$ arrives after $v_i$, $i\geq 2$. The vertex $v_i$ has not yet
been selected when $v$ arrives, and $v$ is dominated by $v_1$, so there
are two vertices not selected.
The last remaining case is 
when $v$ arrives as the second vertex. In this case we distinguish whether
$v_3$ is adjacent to $v_1$, or not. If $v_3$ is adjacent to $v_1$,
then $v$ is not selected,
there are two vertices not selected ($v$ and $v_3$), and we are done.
If $v_3$ is not
adjacent to $v_1$, then \aparent selects $v$ when $v_3$ arrives.
No further vertex will be added to the dominating set, concluding the proof.

For any graph with at least one edge, any total dominating set contains
at least two vertices.
Thus, if \aparent selects more vertices than $\offopt_T$, $\offopt_T$ selects
at least two vertices.
This proves~(\ref{off-tds}).
\qed\end{proof}

The following result shows that
Proposition~\ref{prop:cds-bip}(\ref{off-ac-dscds}) is tight.

\begin{proposition}
For any online algorithm, \A, for \ds or \cds on always-connected
planar graphs, $\croff(\A) \geq n-2$.
\end{proposition}
\begin{proof}
By Lemma~\ref{lemma:whole-tree}, the adversary can construct a tree on
$n-1$ vertices, such that any online algorithm selects at least $n-2$
vertices.
If the adversary then gives one vertex connected to all $n-1$ vertices
in the tree, this last vertex constitutes a connected dominating set.
It is not difficult to see that any such graph is indeed planar.
\qed\end{proof}

For \ds,
let \greedy be the algorithm that selects an
arriving vertex, if and only it is not dominated by a previously
selected vertex.

\begin{proposition}
\label{prop:delta-upper}
For graphs of maximum degree $\Delta$, the following hold.
\begin{enumerate}[(i)]
\item \label{delta:dscds} For any algorithm \ALG for \ds or \cds, $\croff(\A) \le \Delta+1$.
\item \label{delta:greedy} For \ds, $\croff(\greedy) \le \Delta$.
\item \label{delta:tds} For any algorithm \ALG for \tds, $\croff(\A) \le \Delta$.
\item \label{delta:cdscon} For any algorithm $\ALG$ for \cds on connected graphs, $\croff(\A) \le \Delta -1$.
\end{enumerate}
\end{proposition}
\begin{proof}
For \ds and \cds, each vertex can only dominate itself and its at most $\Delta$
neighbors. Thus, $\gamma_C(G) \geq \gamma(G)\ge n/ (\Delta +1)$,
proving item~(\ref{delta:dscds}).

For item~(\ref{delta:greedy}), consider a dominating set $S = \{s_1, s_2,
\ldots, s_k\}$ of size $k=\gamma(G)$.
Partition the vertices of $G$ into $k$ sets $V_1, V_2, \ldots, V_k$
such that $s_i \in V_i$ and all vertices in $V_i \setminus \{s_i\}$
are dominated by $s_i$.
Clearly, $|V_i| \leq \Delta+1$ and 
if $V_i$ has $\Delta+1$ vertices, it is called a {\em critical} set.
If there are exactly $d$ critical sets, then
$n \leq d(\Delta+1) + (k-d)\Delta$.
Thus, $\gamma(G)=k \geq (n-d)/\Delta$.

For each critical set $V_i$, each vertex in the set is connected to at least
one other vertex.
Thus, if \greedy selects the $\Delta$ first vertices of $V_i$, it will
not select the last vertex of $V_i$.
This shows that, from each critical set, \greedy will select at most
$\Delta$ vertices.
Hence, $\greedy(G) \leq n -d$, concluding the proof of
item~(\ref{delta:greedy}). 

For \tds, a vertex can only dominate its at most $\Delta$
neighbors. Thus, $\gamma_T(G)\ge n/ \Delta$, proving
item~(\ref{delta:tds}).

For item~(\ref{delta:cdscon}), let $D$ be a minimum connected
dominating set of a connected graph $G$ with $|D| = k$.
The sum of the degrees of vertices in $D$ is bounded by
$k\Delta$ which is then also an upper bound on how many vertices
$D$ can dominate outside $D$.
Since $D$ is connected, 
any spanning tree of $D$ contains $k-1$ edges
and each endpoint is adjacent to the other endpoint in the spanning tree.
Thus, no vertices outside $D$ are dominated via these
edges.
Thus, at most 
$k\Delta - (2k-2)$ vertices not in $D$ can be dominated by $D$,
giving $n \leq k\Delta-k+2 = k(\Delta-1)+2$ vertices in $G$.
It follows that $\gamma_C(G) \ge (n-2)/(\Delta -1)$ and thus, for any
algorithm \ALG for \cds, $\croff(\A) \le \Delta -1$.
\qed\end{proof}

The upper bound of
Proposition~\ref{prop:delta-upper}(\ref{delta:greedy}) is almost
tight, even for always-connected bounded-degree 
graphs:

\begin{proposition}
For any online algorithm \ALG for \ds on always-connected
bounded-degree graphs, $\croff(\ALG) \ge \Delta-2$.
\end{proposition}

\begin{proof}
We adapt the construction in the proof of
Lemma~\ref{lemma:whole-tree} to work for bounded-degree graphs. 
The adversary first gives $n_1$ vertices inducing a tree.
For convenience, we let $n_1$ be a multiple of $\Delta$.

For the first $n_1$ vertices, the adversary uses the following
strategy.
If there is a vertex $v \not\in D_{i-1}$ with degree less than
$\Delta-1$, $v_{i}$ is connected to $v$.
Otherwise, $v_{i}$ is connected to any vertex $v \in D_{i-1}$ with degree less than
$\Delta-1$.
Thus, the following invariant is maintained.
At most one vertex $v \not\in D_i$ has degree less than $\Delta-1$.
After the first $n_1$ vertices, $n_2=n_1/\Delta$ vertices are given such
that each of the first $n_1$ vertices is adjacent to exactly one of
the last $n_2$ vertices.

Let $V_1$ be the set consisting of the first $n_1$ vertices, and let
$V_2$ contain the last $n_2$ vertices.
By construction, each vertex in $V_1
\setminus D$, except at most one, has at least $\Delta-1$ neighbors in
$V_1$, and for any pair of neighbors, $u,v \in V_1$, at least
one of $u$ and $v$ is included in $D$.
Thus, there are more than $(|V_1 \setminus D|-1)(\Delta-1)$ edges
between $V_1 \setminus D$ and $V_1 \cap D$.
Together with the fact that the number of edges in the subgraph induced
by $V_1$ is $n_1-1$,
this means that $(|V_1\setminus D| -1)(\Delta - 1) \le n_1 - 1$,
implying $|V_1 \setminus D| \leq (n_1-1)/(\Delta-1)+1$.
Thus, $|D| \ge |D\cap V_1| = n_1 - |V_1\setminus D| \ge n_1 - (n_1-1)/(\Delta -1)-1
> (\Delta-2)n_1/(\Delta-1)-1$.
Since $V_2$ constitutes a dominating set of size $n_1/\Delta$, this proves that
the asymptotic competitive ratio satisfies 
$\croff(\ALG) \ge \Delta(\Delta-2)/(\Delta-1) > \Delta-2$.
\qed\end{proof}

Our next aim is to show that there exists an algorithm which is
$n/4$-competitive against $\onopt$ on every graph.
Later, in Propositions~\ref{prop:bip-2layers} and~\ref{prop:bip-2layers-tds}, we prove that this is optimal.
For the algorithm, we use \emph{layers} in a graph~$G$.
The function $L$ assigns layer numbers to vertices as follows:
If $v_i$ has no neighbors when it arrives, let $L(v_i) = 1$; otherwise, let
$$L(v_i) = 1 + \min 
  \SETOF{ L(v_j)}{ \text{$v_j$ is a neighbor of $v_i$ in $G_i$}}.$$

The
algorithm, denoted \alayersparent, is a specialization of \aparent.
For each vertex~$v_i$, $i>1$,
if $v_i$ is not dominated by one of the already selected
vertices, it selects a neighbor of $v_i$ with the smallest
layer number. 
For \cds, if the vertex $v_i$
connects two or more connected components, the algorithm also adds a
minimum-sized set of 
vertices to $D$ to make it connected. This will include the current
vertex and at most one neighbor in each component being connected.
Furthermore, for \ds and \cds, the algorithm also adds the first
vertex to arrive in each of layers~$3$ and~$5$.

The pseudocode for \alayersparent for \ds and \cds is given in
Algorithm~\ref{alg:LPdscds}.

\begin{algorithm} 
\DontPrintSemicolon
\SetNoFillComment
\SetAlgoNoLine
\SetArgSty{textrm}
{\normalsize
\BlankLine
   $D \gets \emptyset$\;
   \While{a vertex $v_i$ is presented}{
      \eIf{$v_i$ has no neighbors in $G_i$}{
         $L(v_i) \gets 1$\;
         $D \gets D \cup \{ v_i\}$\;
         \label{first-layer-choose}
      }{
         $L(v_i) \gets 1+\min\{L(v_j) \mid v_j~\mbox{\rm is a neighbor
         of}~v_i~\mbox{\rm in}~G_i\}$\\
            \If{there is no  $v_j\in D$ such that $v_j$ dominates $v_i$}{
 	       Choose a neighbor $v_j$ of $v_i$ with $L(v_j) = L(v_i)-1$ \;
	       $D \gets D \cup \{ v_j\}$\;
               \label{main-choose-line}
            }
         \If{the problem is \cds}{ \label{cds:start}
            \If{$v_i$ connects vertices belonging to different
              connected components in $G_{i-1}$}{
               Add a minimum-sized set of vertices to $D$
               connecting the corresponding components of $D$ 
               \label{cds:end}
            }
         }
      }
         \If{$L(v_i)\in \{ 3,5\}$
            \label{if-secondary-choose-line}}{
            \If{$|\{ v_j\in G_i \mid L(v_j)=L(v_i)\} |=1$}{
               $D \gets D \cup \{ v_i\}$
               \label{secondary-choose-line}
             }
         }
      }
   }
 \caption{Algorithm \alayersparent for \ds and \cds.\label{alg:LPdscds}}
\end{algorithm}

The pseudocode for \alayersparent for \tds is given in Algorithm~\ref{alg:LPtds}.
Algorithm~\ref{alg:LPtds} is obtained from Algorithm~\ref{alg:LPdscds} by omitting lines~\ref{first-layer-choose} and
\ref{cds:start}--\ref{secondary-choose-line}
and adding the following (lines~\ref{tds:start}--\ref{tds:end}):
For each vertex in layer~$1$, its first neighbor $v$ to arrive is added
to $D$.

\begin{algorithm} 
\DontPrintSemicolon
\SetNoFillComment
\SetAlgoNoLine
\SetArgSty{textrm}
{\normalsize
\BlankLine
   $D \gets \emptyset$\;
   \While{a vertex $v_i$ is presented}{
      \eIf{$v_i$ has no neighbors in $G_i$}{
         $L(v_i) \gets 1$\;
      }{
            $L(v_i) \gets 1+\min\{L(v_j) \mid v_j~\mbox{\rm is a neighbor
            of}~v_i~\mbox{\rm in}~G_i\}$\\
            \If{there is no  $v_j\in D$ such that $v_j$ dominates $v_i$}{
 	       Choose a neighbor $v_j$ of $v_i$ with $L(v_j) = L(v_i)-1$ \;
	       $D \gets D \cup \{ v_j\}$\;
            }
            \If{$v_i$ has an undominated neighbor \label{tds:start}}{
               $D \gets D \cup \{ v_i \}$ \label{tds:end}
            }
         }
      }
   }
 \caption{Algorithm \alayersparent for \tds.\label{alg:LPtds}}
\end{algorithm}

We prove that \alayersparent is asymptotically optimal in most cases.
We consider \ds and \cds first.

\begin{lemma}
\label{lemma:layers}
Consider a graph $G$ and an incremental algorithm \ALG for \ds or \cds.
For each connected component, $H$, of the subgraph $G_i$ of $G$, the following hold.
\begin{enumerate}[(i)]
\item \label{layers:one}
 \ALG selects all vertices of the first layer of $H$.
\item \label{layers:two}
  For any two consecutive layers, $j$ and $j+1$ of $H$, if no
  vertices in layer $j$ are included in the final solution, the first
  vertex of layer $j+1$ is selected by \ALG.
\item \label{layers:three}
  If $H$ has at least $2k+1$ layers, $k \in \ints$, \ALG accepts at least
  $k+1$ vertices in $H$.
\end{enumerate}
\end{lemma}
\begin{proof}
Item~(\ref{layers:one}) follows immediately from the fact that each vertex in
layer 1 is isolated when it arrives. 

For item~(\ref{layers:two}), note that when the first vertex $v$ of layer $j+1$
arrives, it is only connected to vertices in layer $j$, and hence it
is not dominated. 
Since \ALG does not select any vertices from layer $j$, $v$ must be
selected. 

Item~(\ref{layers:three}) follows directly from items~(\ref{layers:one})
and~(\ref{layers:two}).
\qed\end{proof}

\begin{theorem}
\label{thm:upperdscds}
For \ds and \cds,
$\cron(\aparentplus) \leq (n+3)/4$.
\end{theorem}
\begin{proof}
First, if $\onopt(G) \geq 4$, then $\alayersparent(G) \leq n
\leq \frac{n}{4} \onopt(G)$.
Furthermore, if $\onopt(G) = 1$, then $\alayersparent(G) = \onopt(G)$.
Thus, we need only consider graphs, $G$, with $2 \leq \onopt(G) \leq 3$.

We distinguish several cases according to the 
number, $\ell$, of layers of $G$.
If $\ell \leq 2$, then $\alayersparent(G) = \onopt(G)$.
If $\ell \geq 7$, then by Lemma~\ref{lemma:layers}, $\onopt(G) \geq 4$.
Hence, we only need to consider the range $3 \leq \ell \leq 6$.

We consider \ds first.
For $i \geq 1$, let $n_i$ denote the size of the $i$th layer
and $s_i$ the number of vertices in the $i$th layer
selected by $\alayersparent$ in Line~\ref{first-layer-choose}
or~\ref{main-choose-line}
(thus, \emph{not} including the selections in Line~\ref{secondary-choose-line}).
Note that $s_{\ell}=n_{\ell+1}=0$.

Since each vertex in layer $i+1$ causes
at most one vertex in layer $i$ to be selected,
\[
s_i \le n_{i+1}, \text{ for } i \geq 2, 
\mbox{~~and~~}
s_i \le n_i, \text{ for } i \geq 1.
\]
From these two inequalities independently, we get
\[
\sum_{i=2}^{\ell-1} \frac{i-1}{\ell-1}s_i \leq 
\sum_{i=2}^{\ell-1} \frac{i-1}{\ell-1}n_{i+1} =
 \sum_{i=3}^{\ell} \frac{i-2}{\ell-1}n_{i}
\;\; \text{ and } \;\;
\sum_{i=2}^{\ell-1} \frac{\ell-i}{\ell-1}s_i \leq \sum_{i=2}^{\ell-1} \frac{\ell-i}{\ell-1}n_{i}\,.
\]
Adding these two inequalities, we obtain
$$\sum_{i=2}^{\ell-1} s_i \leq \sum_{i=2}^{\ell}\frac{\ell-2}{\ell-1}n_i\,.$$

Let $n'$ be the total number of vertices selected in
lines~\ref{first-layer-choose} and~\ref{secondary-choose-line}.
If $3 \leq \ell \leq 4$, then $n'=n_1+1$.
Finally, if $\ell \geq 5$, then $n'=n_1+2$.

Since $\alayersparent(G) = n' + \sum_{i=2}^{\ell-1} s_i$, we get
\begin{equation}
\label{eq:sum}
\alayersparent(G)-n'
=  \sum_{i=2}^{\ell-1} s_i
\leq \frac{\ell-2}{\ell-1}\sum_{i=2}^{\ell}n_i
= \frac{\ell-2}{\ell-1}(n - n_1)\,.
\end{equation}

We consider always-connected graphs first, for which $n_1=1$. 

For $\ell=3$, Inequality~(\ref{eq:sum}) yields $\alayersparent(G)\leq
(n-1)/2+2 = (n+3)/2$.
Since $\onopt(G) \geq 2$, $\alayersparent(G) / \onopt(G) \leq (n+3)/4 $. 

For $\ell=4$,
 Inequality~(\ref{eq:sum}) gives
$\alayersparent(G)\leq 2(n-1)/3+2 = (2n+4)/3$. 
If $\onopt(G) = 3$, then
$\alayersparent(G) / \onopt(G) \leq (2n+4)/9 < 
(n+3)/4$.
If $\onopt(G)=2$, it follows from Lemma~\ref{lemma:layers} that the
vertices selected by \onopt are the first vertices
in layers~$1$ and~$3$.
Since these vertices are selected on arrival 
by \alayersparent as well, 
\alayersparent selects the same vertices as \onopt, plus
a parent of the first vertex in layer~$3$.
Thus, it selects $3/2 \onopt$ vertices.
This ratio is smaller than $(n+3)/4$, since $n\geq 4$.

For $\ell=5$, Inequality~(\ref{eq:sum}) yields 
$\alayersparent(G)\leq 3(n-1)/4+3 = (3n+9)/4$. 
By Lemma~\ref{lemma:layers}, $\onopt(G) \geq 3$ and hence,
$\alayersparent(G) / \onopt(G) \leq (3n+9)/12 = (n+3)/4$. 

For $\ell=6$, it follows from Lemma~\ref{lemma:layers} that $\onopt(G)
= 3$ and the vertices selected by \onopt are the first vertices in
layers~$1$, $3$, and~$5$.
Since these vertices are selected on arrival
by \alayersparent as well, 
\alayersparent selects the same vertices as \onopt, plus
a parent of the first vertex in layer~$3$ and a parent of the first
vertex in layer~$5$.
Thus, it selects $5/3 \onopt$ vertices.
This ratio is smaller than $(n+3)/4$, since $n\geq 6$.

We now consider graphs which are not always-connected.
Note that we still assume that $\onopt(G) \le 3$, and
Inequality~(\ref{eq:sum}) still holds.
If $G$ is given in a disconnected order, layer~$1$ contains at least
two vertices and by Lemma~\ref{lemma:layers}, \onopt, just as
\alayersparent, accepts all vertices of layer~$1$. 
Therefore, if $\onopt(G)=2$, then $\ell \leq 2$, and
$\alayersparent(G)=\onopt(G)$. 
Moreover, if layer~$1$ contains three vertices,
then $\ell \leq 2$, and $\onopt(G) = 3 = \alayersparent(G)$. 
Hence, we only need to consider the case where $\onopt(G)=3$ and the
first layer contains exactly two vertices.
Note that, in this case, $3 \leq \ell \leq 4$.

If $\ell=3$, then by Inequality~(\ref{eq:sum}),
$\alayersparent(G) \leq (n-2)/2+3 = (n+4)/2$.
It follows that $\alayersparent(G)/\onopt(G) \leq (n+4)/6 < (n+3)/4$.

If $\ell=4$, then by Inequality~(\ref{eq:sum}),
$\alayersparent(G) \leq 2(n-2)/3+3 = (2n+5)/3$.
It follows that $\alayersparent(G)/\onopt(G) \leq (2n+5)/9 < (n+3)/4$.

We now consider \cds.
If the graph is always-connected, \alayersparent for \cds selects the same
vertices as for \ds, so the calculations for \ds also hold for \cds.
Thus, we only need to consider graphs that are not always-connected.

If layer~$1$ contains three vertices, then the graph cannot have an
 incremental \cds with fewer than four vertices, contradicting 
$\onopt(G) \leq 3$.
Thus, we can assume that the graph never has more than two connected
components and that the two components arrive in an always-connected
manner. 
If the two components remain unconnected, the above analysis for \ds
holds for each component.
Otherwise, 
\onopt connects the two components by
selecting exactly one vertex, $v$, and since \onopt is \incr, the
two components must be unconnected until the arrival of $v$.
Thus, the vertices selected by \alayersparent are exactly the three
vertices selected by \onopt and the first vertex of layer~$3$, if it
arrives.
Hence, $\alayersparent(G)/\onopt(G) \leq 4/3 < \frac{n+3}{4}$, since
$n \geq 3$.
\qed\end{proof}

We now consider \tds.
For general graphs, we obtain an upper bound of approximately $n/4$,
as we did for \ds and \cds.
For always-connected graphs, the upper bound is improved to
approximately $2n/9$.

\begin{theorem}
For \tds, $\cron(\aparentplus) \leq (n+2)/4$, and for \tds on
always-connected graphs, $(2n+1)/9 \leq \cron(\aparentplus) \leq
(2n+2)/9$.
\end{theorem}
\begin{proof}
We use the same notation as in the proof of
Theorem~\ref{thm:upperdscds}.
Thus, $n_i$ denotes the size of the $i$th layer, $s_i$ denotes the
number of vertices in the $i$th layer selected by $\alayersparent$,
and $\ell$ is the total number of layers.

Since the first vertex in each layer $i>1$ is only connected to vertices in
layer $i-1$, and choosing that first vertex does not dominate it, any 
incremental algorithm must choose at least one vertex
in each layer, except the last.
Hence, $\onopt(G) \ge \ell-1$.
Since $\alayersparent(G) \leq n$, this means that for general graphs,
we can assume $\ell \leq 4$.
For always-connected graphs, we can assume $\ell \leq 5$, since $\ell
\geq 6$ implies a ratio of at most $n/5 < 2n/9$.

If $\ell=1$, $\alayersparent(G)=\onopt(G)=0$.
Hence, it suffices to consider $\ell \geq 2$.

We use inequalities similar to those for \ds:
\begin{alignat*}{2}
& s_1 \leq n_1 \; & \text{ and } \; & s_1 \leq n_2\\
& s_2 \leq n_2 \; & \text{ and } \; & s_2 \leq n_1 + n_3\\
& s_i \leq n_i \; & \text{ and } \; & s_i \leq n_{i+1}, \text{ for } i
  \geq 3
\end{alignat*}

Before using the inequalities, we strengthen the inequality $s_2 \leq
n_1+n_3$.
When the first vertex of layer~$2$ arrives, \onopt as well as
\alayersparent will select this vertex and a vertex from layer~$1$.
If \alayersparent selects one more vertex from layer~$2$, \onopt will
also have to select an additional vertex, and hence, $\onopt(G) \geq 3$.
Thus,
 $$s_2=1, \text{ if } \onopt(G) = 2 \,.$$
Note that no vertex in layer~$1$ has a neighbor outside of layer~$2$.
Consider a vertex $u$ in layer~$1$.
When the first neighbor, $v$, of $u$ arrives, any incremental
algorithm has to select $v$.
Thus, the vertices in layer~$2$ that \alayersparent selects in
order to dominate vertices in layer~$1$ are also selected by \onopt.
Hence, since \onopt selects at least one vertex in layer~$1$, 
 $$s_2 \leq \onopt(G)-1 + n_3\,.$$

We consider general graphs first.
Recall that for general graphs, we only need to consider $2 \leq \ell
\leq 4$.

For $\ell = 2$, 
\begin{align*}
\alayersparent(G)
&  = s_1+s_2 \\
&  \leq \left( \frac12 n_1 + \frac12 n_2 \right) + 
       \left( \onopt(G)-1+n_3 \right) \\
&  = \frac12(n-2) + \onopt(G), \text{ since } n_3=0.
\end{align*}
Hence, 
$\alayersparent(G) / \onopt(G) \leq \frac14(n-2) + 1 =
\frac14(n+2)$.

For $\ell=3$, we first consider the case $\onopt(G)=2$.
In this case, $s_2=1$.
Hence, 
$$\alayersparent(G) 
  = s_1 + s_2 
  \leq \left( \frac12 n_1 + \frac12 n_2 \right) + 1 
  < \frac12n +1\,.
$$
and 
$\alayersparent(G) / \onopt(G) < \frac14(n+2)$.
For $\onopt(G)=3$, we note that 
\begin{align*}
\alayersparent(G) 
&  = s_1+s_2\\
&  \leq \left( \frac{3}{4}n_1 + \frac{1}{4}n_2 \right) +
       \left( \frac{1}{2}n_2 + \frac{1}{2}(\onopt(G)-1+n_3) \right) \\
&  < \frac{3}{4}n + \frac12{\onopt(G)}\,.
\end{align*}
Thus, 
$\alayersparent(G) / \onopt(G) < \frac14 n + \frac12 = \frac14(n+2)$.

For $\ell=4$, $\onopt \geq 3$
and
\begin{align*}
&  \alayersparent(G) \\
&  = s_1+s_2+s_3 \\
&  \leq \left( \frac34 n_1 + \frac14 n_2 \right) + 
       \left( \frac12 n_2 + \frac12(\onopt(G)-1+n_3) \right) + 
       \left( \frac14 n_3 + \frac34 n_4 \right) \\
&  < \frac34 n + \frac12 \onopt(G)\,.
\end{align*}
Thus, $\alayersparent(G) / \onopt(G) < \frac14 n + \frac12 = \frac14(n+2)$.

We now consider always-connected graphs, for which $s_1=n_1=1$.
We have argued that for the upper bound, it is sufficient to consider
$2 \leq \ell \leq 5$. 

If $\ell=2$, $\alayersparent(G)=\onopt(G)=2$.

For $\ell=3$, we consider $\onopt(G)=2$ first.
In this case, $\alayersparent(G) = s_1 + s_2 = 1+1 = \onopt(G)$.
For $\onopt(G) \geq 3$, note that 
\begin{align*}
\alayersparent(G) 
&  = s_1+s_2\\
&  \leq 1 + \left( \frac{1}{2}n_2 + \frac12(n_3+1) \right) \\
&  = 1 + \frac12 n\\
&  = \frac{n+2}{2}\,.
\end{align*}
Thus, $\alayersparent(G)/\onopt(G) \leq (n+2)/6 < (2n+2)/9$, since $n > 2$.

If $\ell=4$, then $\onopt(G) \geq 3$.
Moreover,
\begin{align*}
\alayersparent(G) 
&  = s_1+s_2+s_3 \\
&  \leq 1 + \left( \frac23 n_2 + \frac13(n_3+1) \right) + 
       \left( \frac13 n_3 + \frac23 n_4 \right) \\
&  = \frac43 + \frac23 (n-1)\\
&  = \frac{2n+2}{3} \,.
\end{align*}
Thus, $\alayersparent(G) / \onopt(G) \leq \frac{2n+2}{9}$.

If $\ell=5$, then $\onopt(G) \geq 4$.
Moreover,
\begin{align*}
&  \alayersparent(G) \\
&  = s_1+s_2+s_3+s_4 \\
&  \leq 1 + \left( \frac34 n_2 + \frac14(n_3+1) \right) + 
       \left( \frac12 n_3 + \frac12 n_4 \right) +
       \left( \frac14 n_4 + \frac34 n_5 \right) \\
&  = \frac54 + \frac34 (n-1)\\
&  = \frac{3n+2}{4} \,.
\end{align*}
Thus, $\alayersparent(G) / \onopt(G) \leq \frac{3n+2}{16} <
\frac{2n}{9}$, since $n \geq 5$.

Finally, we prove the lower bound for always-connected graphs, using
the following adversarial input sequence defining a graph $G$ with four
layers.
The first layer consists of the vertex $u$.
The following three layers each have $m$ vertices, for some large
integer $m$.
The vertices of the second, third, and fourth layers are called
$v_1,\ldots,v_m$, $w_1,\ldots,w_m$, and $x_1,\ldots,x_m$, respectively.
The vertices are given layer by layer, in the order according to their
numbering.

No vertex in the second layer is connected to any other vertex in the
same layer.
In the third layer, $w_1$ is connected to $v_1$, and for $2 \leq i
\leq m$, $w_i$ is connected to $v_i$ and $w_1$.
In the fourth layer, for $1 \leq i \leq m-1$, $x_i$ is connected to
$w_1$ and $w_{i+1}$, and $x_m$ is connected $w_1$.

\onopt selects the three vertices $u$, $v_1$, and $w_1$.

\alayersparent selects $u$ and $v_1$ on arrival.
For $2 \leq i \leq m$, it selects $v_i$ when $w_i$ arrives.
Hence, when the first three layers have arrived, all vertices of
layers~1 and 2 have been selected.
Each vertex $x_1, \ldots, x_{m-1}$ can be dominated by either $w_1$ or
$w_{i+1}$.
If \alayersparent always chooses the latter, it will select all
vertices $w_2,\ldots,w_{m}$, and when $x_m$ arrives, it must select
$w_1$.
In total, \alayersparent selects $1+2m = 1 + 2(n-1)/3 = (2n+1)/3$,
yielding a ratio of $\alayersparent(G)/\onopt(G) = (2n+1)/9$. 
\qed\end{proof}

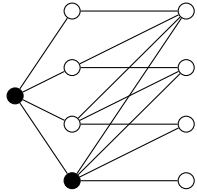
\begin{figure}[!htb]
\begin{center}
\begin{tikzpicture}[scale=0.75]
	\vertex[fill] (l0) at (0,1.5) {};
	\vertex[fill] (m0) at (1,0) {};
	\vertex       (m1) at (1,1) {};
	\vertex       (m2) at (1,2) {};
	\vertex       (m3) at (1,3) {};
	\vertex       (r0) at (3,0) {};
	\vertex       (r1) at (3,1) {};
	\vertex       (r2) at (3,2) {};
	\vertex       (r3) at (3,3) {};

	\path
		(l0) edge (m0)
		(l0) edge (m1)
		(l0) edge (m2)
		(l0) edge (m3)
		(m0) edge (r0)
		(m0) edge (r1)
		(m0) edge (r2)
		(m0) edge (r3)
		(m1) edge (r1)
		(m1) edge (r2)
		(m1) edge (r3)
		(m2) edge (r2)
		(m2) edge (r3)
		(m3) edge (r3)
	;
\end{tikzpicture}
\caption{A three-layer construction;
the minimum connected dominating set is indicated by the black vertices
(Proposition~\ref{prop:bip-2layers}).}
\label{fig:two_layers}
\end{center}
\end{figure}

\begin{proposition}
\label{prop:bip-2layers}
On bipartite graphs, the following hold for any online algorithm \A
for \ds or \cds.
\begin{enumerate}[(i)]
\item \label{bip:n}
 For \ds and \cds on always-connected graphs, $\cron(\A)\ge n/4$.
\item \label{bip:deltaac}
 For \ds on always-connected bounded-degree graphs, $$\cron(\A) \ge
 \Delta/2-1/4.$$
\item \label{bip:deltaaccds}
 For \cds on always-connected bounded-degree graphs, $$\cron(\A) \ge \Delta/3.$$ 
\item \label{bip:delta}
 For \cds on bounded-degree graphs, $\cron(\A)\ge \Delta/2$.
\end{enumerate}
\end{proposition}
\begin{proof}
For items~(\ref{bip:n}) and~(\ref{bip:delta}), we prove that
for any integer $\Delta\ge 2$, 
there is an always-connected bipartite graph, $G$, with maximum degree~$\Delta$
such that $\A(G) \geq \Delta = n/2$ and
$\onopt(G)=\onoptC(G) = 2$.

The graph $G$ consists of three layers.
The first layer contains only one vertex $u$, and
the second layer contains $\Delta-1$ vertices $v_1,
\ldots, v_{\Delta-1}$ adjacent to $u$.
After the entire second layer is presented to
the algorithm, the vertices of the second layer are indistinguishable
to the algorithm.
The last layer consists of $\Delta-1$ vertices $w_1,\ldots, w_{\Delta-1}$,
which will be given in that order,
with adjacencies as follows: For
$i=1,\ldots, \Delta-1$, $w_i$ is
connected to $\Delta - i$ vertices of the second layer in such a way
that 
$N(w_{i+1}) \subset N(w_{i})$
and $N(w_{i})$ contains as few vertices
from $D_{i-1}$ as
possible.
An example of this construction for $\Delta = 4$ is depicted in
Figure~\ref{fig:two_layers}.

Consider the situation when the vertex $w_i$ arrives.
If the set~$N(w_i)$ does not contain a vertex
from $D_{i-1}$, then \A must select
at least one additional vertex at this time.
Thus, \A{} selects at least $\Delta-1 = (n-1)/2$ vertices
from the second and third layer, plus the root.
Since there is a vertex~$v$ in the second layer that is adjacent to all
vertices in the third layer, $\{u, v\}$ is an \incr connected and total 
dominating set of $G$, concluding the proof of~(\ref{bip:n}).
Since the adversary can use any number of copies of $G$, this also
finishes the proof of~(\ref{bip:delta}).

For items~(\ref{bip:deltaac}) and~(\ref{bip:deltaaccds}),
note that the adversary can
use any number of copies of $G$, with one vertex in the third layer of
copy $k$ connected to the vertex in the first layer of copy $k+1$.

For \ds, note that for any given algorithm $\A$, $G$ is constructed in such 
a way that $\A$ must select at least
one vertex from layers~$1$ and~$2$ and at least $\Delta-1$ vertices from
layers~$2$ and $3$ in each copy of $G$.
Thus, if \A selects all $\Delta-1$ vertices of layer~$3$ in some copy
of $G$, it selects at least $\Delta$ vertices from this copy.
Otherwise, the adversary can connect the next copy of $G$ to a vertex $w$
not selected by \A.
In this case, the algorithm will have to select $w$ or the first
vertex of the next copy of $G$. 
Hence, from two consecutive copies of $G$, \A selects at least
$2\Delta-1$ vertices.
On the other hand, choosing two vertices from each copy as described above will
result in an incremental dominating set. 
This proves~(\ref{bip:deltaac}).

For \cds, the adversary will connect adjacent copies of $G$ in the
following way.
The vertex in layer~$3$ connected to all vertices in layer~$2$ will be
connected to the first vertex of the following copy of $G$.
Thus, an incremental connected dominating set can be created by
selecting one vertex from each of layers~$1$ and~$2$ as described
above plus the vertex in layer~$3$ connected to the next copy of $G$.
Again, \A will select at least $\Delta-1$ vertices from layers $2$
and $3$ in each copy of $G$, and to make the dominating set connected,
it will also select the vertex in layer~$1$.
This proves~(\ref{bip:deltaaccds}). 
\qed\end{proof}

The above adversary strategy does not work for \tds, since \onopt
needs to accept the two first vertices of the graph $G$.
Thus, we use a slightly different graph to prove the following
proposition.

\begin{proposition}
\label{prop:bip-2layers-tds}
On bipartite graphs, the following hold for any online algorithm \A.
\begin{enumerate}[(i)]
\item \label{bip:tdsac}
 For \tds, on always-connected graphs, $$\cron(\A)\ge n/6 \text{~~and~~}
 \cron(\A) \ge \Delta/3.$$
\item \label{bip:tds}
 For \tds, $\cron(\A)\ge n/4$ \; and \; $\cron(\A)\ge (\Delta+1)/2$.
\end{enumerate}
\end{proposition}
\begin{proof}
For item~(\ref{bip:tdsac}), we use a graph, $G'$, identical to the
graph $G$ used in the proof of Proposition~\ref{prop:bip-2layers},
except that the second layer has $\Delta$ vertices, and no vertex in
layer~$3$ is connected to the first vertex of layer~$2$.
For bounded-degree graphs, the adversary gives many copies of $G'$,
and for each copy except the 
last, the first vertex of layer~$2$ is connected to the first vertex
of layer~$2$ in the following copy.
In all copies of $G'$, except the first, the first vertex of layer~$2$
is given before the vertex of layer~$1$.

For each copy of $G'$, any \incr algorithm for \tds will select the vertex of
layer~$1$ and the first vertex of layer~$2$, and
\A will also select the remaining $\Delta-1$ vertices of layer~$2$.
Among the last $\Delta-1$ vertices selected by \A, \onopt will only
select the last one to be selected by \A.
This proves item~(\ref{bip:tdsac}).

For item~(\ref{bip:tds}), we use a graph consisting of only two layers.
The vertices of layer~$1$ are given first.
Then, the following is repeated.
As long as there is a vertex in layer~$1$ not selected by \A, a
vertex is given which is adjacent to exactly the vertices in layer~$1$
not yet selected by \A.
For each of these vertices, \ALG has to select a vertex in layer~$1$.
It follows that layer~$1$ contains $\Delta \geq n/2$ vertices, and \ALG
selects at least $\Delta+1$ vertices, all of those in layer~$1$ and the
first in layer~$2$.
On the other hand, \onopt chooses only the first vertex of layer~$2$
and the last vertex of layer~$1$ to be included in \ALG's dominating
set.
This proves~(\ref{bip:tds}). 
\qed\end{proof}

\section{The Cost of Being Incremental}
\label{sec:offline}
This section is devoted to comparing the performance of
\incr algorithms and \offopt.
Since \offopt{} performs at least as well as \onopt and \onopt
performs at least as well as any online algorithm, each lower bound
in Table~\ref{table:any-offopt} is at least the maximum of
the corresponding lower bound in Table~\ref{table:any-onopt} and
the corresponding lower bound for $\croff(\onopt)$.
Similarly, each upper bound in Table~\ref{table:any-onopt} is at most
the corresponding upper bound in
Table~\ref{table:any-offopt}.
In both cases, we mention only bounds
that cannot be obtained in this way from cases considered already.
We first give two positive results.

\begin{proposition}
For \ds, the following hold.
\begin{enumerate}[(i)]
\item \label{trees:two}
 On trees, $\croff(\onopt) \le 2$.
\item \label{ac:ntwo}
 On always-connected graphs, $\croff(\onopt) \le \lceil n/2 \rceil$.
\end{enumerate}
\end{proposition}
\begin{proof}
Item~(\ref{trees:two}) follows directly from
Theorem~\ref{thm:cds}(\ref{thm:cds:i}).

We now consider item~(\ref{ac:ntwo}).
For a fixed ordering of the vertices of $G$, consider the layers
$L(v)$ assigned
to vertices of $G$. It is easy to see that the set of vertices
in the odd layers is an \incr solution for \ds and similarly
for the set of vertices in even layers plus the vertex $v_1$. Therefore,
\onopt can select the smaller of these two sets, which necessarily
has at most $\lfloor (n-1)/2 \rfloor +1 = \lceil n/2 \rceil$ vertices.
\qed\end{proof}

The remaining results are negative results.

\begin{proposition}
\label{prop:ds-onopt}
On bipartite planar graphs, the following hold.
\begin{enumerate}[(i)]
\item \label{dso:one} For \ds,
$\croff(\onopt) \ge \Delta$ \, and \, 
$\croff(\onopt) \ge n-1$.
\item \label{dso:two} For \cds,
$\croff(\onopt) \ge \Delta+1$ \, and \, 
$\croff(\onopt) \ge n$.
\end{enumerate}
\end{proposition}
\begin{proof}
We prove that for each $\Delta \ge 3$, $i > 0$, and $n=i(\Delta+1)$, there is a
bipartite planar graph~$G$ with $n$ vertices and maximum
degree~$\Delta$ such that 
\begin{align*}
&\onopt(G) = \frac{\Delta}{\Delta+1}n\,,\;
 \onoptC(G) = n\,, \text{ and}\\
&\gamma(G) = \gamma_C(G) = \frac{n}{\Delta+1}\,,
\end{align*}
implying the first lower bound of both~(\ref{dso:one}) and~(\ref{dso:two}).
Letting $i=1$, and hence $n=\Delta+1$, gives the second lower bound of
both~(\ref{dso:one}) and~(\ref{dso:two}).

Let $G$ consist of $i$ disjoint copies of the star on $\Delta +1$ vertices,
with the center of each star arriving as the last vertex among
the vertices of that particular star.
Clearly, $\gamma(G) = \gamma_C(G) = n/(\Delta+1)$.
On the other hand, any \incr dominating set has to contain
every vertex, except the last vertex of each star, since all
these vertices are pairwise non-adjacent. 
In addition, any \incr connected dominating set
has to contain the centers of the stars
to preserve connectedness of the solution in each component.
It follows that for Dominating Set,
$\onopt$ selects $n\Delta/(\Delta + 1)$ vertices, and for Connected
Dominating Set, it selects all $n$ vertices. 
\qed\end{proof}

\begin{proposition}
\label{prop:ids-onopt}
For \ids on bipartite planar graphs, the following hold.
\begin{enumerate}[(i)]
\item \label{ids:acn}
 On always-connected graphs, $\croff(\onopt) \geq n-1$. 
\item \label{ids:delta}
 On bounded-degree graphs, $\croff(\onopt) \geq \Delta$.
\item \label{ids:acdelta}
 On always-connected bounded-degree graphs, $\croff(\onopt) \ge \Delta-1$.
\end{enumerate}
\end{proposition}
\begin{proof}
For~(\ref{ids:acn}), let $G$ be a star, where the second vertex to
arrive is the center vertex.
Clearly, $\gamma_I(G) = 1$. 
Since the first vertex is always selected by any \incr
algorithm, the center vertex 
cannot be selected. Consequently, all $n-1$ vertices of 
degree $1$ have to be selected in the dominating set, which proves
the lower bound of the first part.

For~(\ref{ids:delta}), note that the adversary can
give any number of copies of $G$.

For~(\ref{ids:acdelta}), note that the adversary can
make arbitrarily many copies of $G$ and connect two consecutive copies
by identifying two vertices of degree $1$, one from each copy.
\qed\end{proof}

\begin{proposition}
\label{prop:ids-onopt-upper}
For \ids, \, $\croff(\onopt) \leq \Delta \leq  n-1$
\end{proposition}
\begin{proof}
To prove the upper bound of~$\Delta$, consider any graph, $G$, with
maximum degree~$\Delta$, and
let $S = \{s_1,\ldots, s_k\}$
 be an independent dominating 
set of $G$ of size $k=\gammaI(G)$.

Let $R_1, \ldots, R_k$ be a partition of $V$ such that all vertices in $R_i$ 
are dominated by $s_i$. Let $R'_i = R_i \setminus \{s_i\}$ and note that 
$R'_1,\ldots, R'_k$ is a partition of $V \setminus \{s_1,\ldots, s_k\}$.
For each $i$, the vertex $s_i$ can be in an independent dominating set~$D$
only if $R'_i \cap D = \emptyset$.
Thus, $|D| \leq \sum_{i=1}^k \max\{ |\{ s_i\}|, |R'_i|\}
= \sum_{i=1}^k \max\{ 1, |R'_i|\}$, and
$|D|/|S|$ is bounded by the maximum possible size of $R'_i$, which
is $\Delta$.
Since $\Delta\leq n-1$ for all simple graphs, this concludes the proof.
\qed\end{proof}

\begin{lemma}
\label{lemma:path}
For any positive integer $n \ge 3$ and $P_n$ given in the standard order, 
$$\onopt(P_n) = \lceil n/2 \rceil \: \text{ and } \: \onopt_C(P_n) = \onopt_T(P_n) =
n-1\,.$$
\end{lemma}
\begin{proof}
The result for $\onopt$ follows from
Lemma~\ref{lemma:layers}(\ref{layers:three}) and
the fact that selecting the vertices with odd index results in an
incremental dominating set.

For $\onopt_C$, note that $v_1$ must be selected and hence, each
$v_i$, $2 \leq i \leq n-1$, must be selected no later than when
$v_{i+1}$ arrives.

The result on $\onopt_T$ follows from
Lemma~\ref{lemma:tds-is-connected} and the result on $\onopt_C$. 
\qed\end{proof}

A \emph{fan} of degree $\Delta$ is the graph obtained from a path $P_{\Delta}$ 
by addition of a vertex $v$ that is adjacent to all
vertices of the path, as in Figure~\ref{fig:fan}. The adversarial 
order of a fan is defined by the standard order of the underlying path,
followed by the vertex~$v$.

\begin{figure}[!htb]
\begin{center}
\begin{tikzpicture}[scale=0.75]
	\vertex       (p0) at (0,0) {};
	\vertex       (p1) at (1,0) {};
	\vertex       (p2) at (2,0) {};
	\vertex       (p3) at (3,0) {};
	\vertex       (t0) at (1.5,1) {};

	\path
		(p0) edge (p1)
		(p1) edge (p2)
		(p2) edge (p3)
		(t0) edge (p0)
		(t0) edge (p1)
		(t0) edge (p2)
		(t0) edge (p3)
	;
\end{tikzpicture}
\caption{A fan of degree~$4$ (Proposition~\ref{prop:fans}).
\label{fig:fan}}
\end{center}
\end{figure}
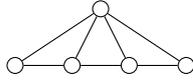

\begin{proposition}
\label{prop:fans}
For always-connected planar graphs, the following hold.
\begin{enumerate}[(i)]
\item \label{fans:ds} For \ds, $\croff(\onopt) \ge n/2$.
\item \label{fans:cds} For \cds, $\croff(\onopt) \ge n-2$.
\item \label{fans:tds} For \tds, $\croff(\onopt) \ge n/2 -1$.
\end{enumerate}
\end{proposition}
\begin{proof}
Let $G$ be a fan of degree $\Delta=n-1$, where $n$ is even, given in the adversarial
order. By Lemma~\ref{lemma:path},
$\onopt(G) = n/2$ and $\onoptC(G) = \onoptT(G) = n-2$.
Furthermore,
$\gamma(G) = \gamma_C(G) = 1$, and $\gamma_T(G) = 2$,
since $v_n$ forms a connected dominating set of size $1$, and 
$\{v_1,v_n\}$ is a total dominating set of size $2$.
This proves~(\ref{fans:ds})--(\ref{fans:tds}).
\qed\end{proof}

An \emph{alternating fan} with $k$ fans of degree $\Delta$
consists of $k$ copies of the fan of degree $\Delta$,
where the individual copies are joined in a path-like manner by
identifying some of the vertices of degree~$2$, as in
Figure~\ref{fig:alt-fan}. 
Thus, $n=k(\Delta+1)-(k-1)$ and $k=(n-1)/\Delta$.
The adversarial 
order of an alternating fan is defined by the concatenation of
the adversarial orders of the underlying fans.

\begin{figure}[!htb]
\begin{center}
\begin{tikzpicture}[scale=0.75]
	\vertex       (p0) at (0,1) {};
	\vertex       (p1) at (1,1) {};
	\vertex       (p2) at (2,1) {};
	\vertex       (p3) at (3,1) {};
	\vertex       (p4) at (4,1) {};
	\vertex       (p5) at (5,1) {};
	\vertex       (p6) at (6,1) {};
	\vertex       (p7) at (7,1) {};
	\vertex       (p8) at (8,1) {};
	\vertex       (p9) at (9,1) {};
	\vertex[fill] (t0) at (1.5,2) {};
	\vertex[fill] (b0) at (4.5,0) {};
	\vertex[fill] (t1) at (7.5,2) {};

	\path
		(p0) edge (p1)
		(p1) edge (p2)
		(p2) edge (p3)
		(p3) edge (p4)
		(p4) edge (p5)
		(p5) edge (p6)
		(p6) edge (p7)
		(p7) edge (p8)
		(p8) edge (p9)
		(t0) edge (p0)
		(t0) edge (p1)
		(t0) edge (p2)
		(t0) edge (p3)
		(b0) edge (p3)
		(b0) edge (p4)
		(b0) edge (p5)
		(b0) edge (p6)
		(t1) edge (p6)
		(t1) edge (p7)
		(t1) edge (p8)
		(t1) edge (p9)
	;
\end{tikzpicture}
\caption{An alternating fan with $3$ fans of degree~$4$ (Proposition~\ref{prop:bridges}(\ref{bridges:ds})).}
\label{fig:alt-fan}
\end{center}
\end{figure}
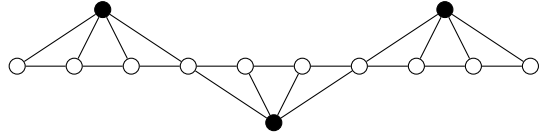

A \emph{bridge of degree $\Delta$ with $k$ sections} is
obtained from a path of $k(\Delta-2)$ vertices $v_1, v_2, \ldots,
v_{k(\Delta-2)}$, in that order, together with $k$ vertices $u_1, u_2, \ldots,
u_k$.
For $1 \leq i \leq k$, $u_i$ is connected to the $\Delta-2$ vertices
$$v_{(i-1)(\Delta-2)+1}, v_{(i-1)(\Delta-2)+2},\ldots, v_{i(\Delta-2)},$$ and
for $1 \leq i \leq k-1$, $u_i$ is connected to $u_{i+1}$.
See Figure~\ref{fig:conn-fan} for an example.
The adversarial order of a bridge of degree $\Delta$ with $k$ sections
is $v_1$, $v_2$, \ldots, $v_{k(\Delta-2)}$, $u_1$, $u_2$, \ldots, $u_k$. 

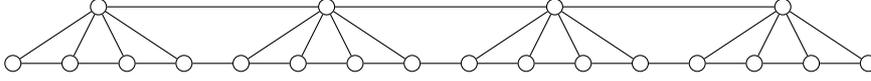
\begin{figure}[!htb]
\begin{center}
\begin{tikzpicture}[scale=0.75]
	\vertex       (p0) at (0,0) {};
	\vertex       (p1) at (1,0) {};
	\vertex       (p2) at (2,0) {};
	\vertex       (p3) at (3,0) {};
	\vertex       (p4) at (4,0) {};
	\vertex       (p5) at (5,0) {};
	\vertex       (p6) at (6,0) {};
	\vertex       (p7) at (7,0) {};
	\vertex       (p8) at (8,0) {};
	\vertex       (p9) at (9,0) {};
	\vertex       (p10) at (10,0) {};
	\vertex       (p11) at (11,0) {};
	\vertex       (p12) at (12,0) {};
	\vertex       (p13) at (13,0) {};
	\vertex       (p14) at (14,0) {};
	\vertex       (p15) at (15,0) {};
	\vertex       (t0) at (1.5,1) {};
	\vertex       (t1) at (5.5,1) {};
	\vertex       (t2) at (9.5,1) {};
	\vertex       (t3) at (13.5,1) {};

	\path
		(p0) edge (p1)
		(p1) edge (p2)
		(p2) edge (p3)
		(p3) edge (p4)
		(p4) edge (p5)
		(p5) edge (p6)
		(p6) edge (p7)
		(p7) edge (p8)
		(p8) edge (p9)
		(p9) edge (p10)
		(p10) edge (p11)
		(p11) edge (p12)
		(p12) edge (p13)
		(p13) edge (p14)
		(p14) edge (p15)
		(t0) edge (t1)
		(t1) edge (t2)
		(t2) edge (t3)
		(t0) edge (p0)
		(t0) edge (p1)
		(t0) edge (p2)
		(t0) edge (p3)
		(t1) edge (p4)
		(t1) edge (p5)
		(t1) edge (p6)
		(t1) edge (p7)
		(t2) edge (p8)
		(t2) edge (p9)
		(t2) edge (p10)
		(t2) edge (p11)
		(t3) edge (p12)
		(t3) edge (p13)
		(t3) edge (p14)
		(t3) edge (p15)
	;
\end{tikzpicture}
\caption{A bridge of degree~$6$ with $4$ sections (Proposition~\ref{prop:bridges}(\ref{bridges:cds})).}
\label{fig:conn-fan}
\end{center}
\end{figure}

For even $k$, a \emph{modular bridge of degree $\Delta$ with $k$ sections}
is the same as a bridge of degree $\Delta-1$ with $k$ sections, except
that for even $i$, the edge between $u_i$ and $u_{i+1}$ is not present.

\begin{figure}[!htb]
\begin{center}
\begin{tikzpicture}[scale=0.75]
	\vertex       (p0) at (0,0) {};
	\vertex       (p1) at (1,0) {};
	\vertex       (p2) at (2,0) {};
	\vertex       (p3) at (3,0) {};
	\vertex       (p4) at (4,0) {};
	\vertex       (p5) at (5,0) {};
	\vertex       (p6) at (6,0) {};
	\vertex       (p7) at (7,0) {};
	\vertex       (p8) at (8,0) {};
	\vertex       (p9) at (9,0) {};
	\vertex       (p10) at (10,0) {};
	\vertex       (p11) at (11,0) {};
	\vertex       (p12) at (12,0) {};
	\vertex       (p13) at (13,0) {};
	\vertex       (p14) at (14,0) {};
	\vertex       (p15) at (15,0) {};
	\vertex       (t0) at (1.5,1) {};
	\vertex       (t1) at (5.5,1) {};
	\vertex       (t2) at (9.5,1) {};
	\vertex       (t3) at (13.5,1) {};

	\path
		(p0) edge (p1)
		(p1) edge (p2)
		(p2) edge (p3)
		(p3) edge (p4)
		(p4) edge (p5)
		(p5) edge (p6)
		(p6) edge (p7)
		(p7) edge (p8)
		(p8) edge (p9)
		(p9) edge (p10)
		(p10) edge (p11)
		(p11) edge (p12)
		(p12) edge (p13)
		(p13) edge (p14)
		(p14) edge (p15)
		(t0) edge (t1)
		(t2) edge (t3)
		(t0) edge (p0)
		(t0) edge (p1)
		(t0) edge (p2)
		(t0) edge (p3)
		(t1) edge (p4)
		(t1) edge (p5)
		(t1) edge (p6)
		(t1) edge (p7)
		(t2) edge (p8)
		(t2) edge (p9)
		(t2) edge (p10)
		(t2) edge (p11)
		(t3) edge (p12)
		(t3) edge (p13)
		(t3) edge (p14)
		(t3) edge (p15)
	;
\end{tikzpicture}
\caption{A modular bridge of degree~$5$ with $4$ sections
(Proposition~\ref{prop:bridges}(\ref{bridges:tds})).}
\label{fig:total-fan}
\end{center}
\end{figure}
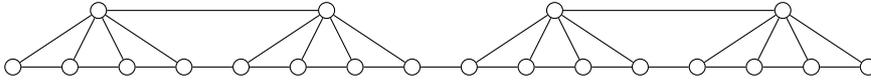

\begin{proposition}
\label{prop:bridges}
For always-connected bounded-degree planar graphs, the following hold.
\begin{enumerate}[(i)]
\item \label{bridges:ds} For \ds, $\croff(\onopt) \ge (\Delta-1)/2$.
\item \label{bridges:cds} For \cds, $\croff(\onopt) \ge \Delta-2$.
\item \label{bridges:tds} For \tds, $\croff(\onopt) \ge \Delta-1$.
\end{enumerate}
\end{proposition}
\begin{proof}
For~(\ref{bridges:ds}), let $G$ be an alternating fan with $k$ fans of degree $\Delta$, for any 
$\Delta \ge 4$, given in the adversarial order.
We prove that 
$\onopt(G) > n(\Delta-1)/(2\Delta)$ and $\gamma(G) \leq (n - 1)/\Delta$.
Starting with the latter, a fan consists of $\Delta+1$ vertices,
but the fans share one vertex, so a new one starts every $\Delta$
vertices, except for the final vertex which accounts for the $-1$.
For the former claim,
in Figure~\ref{fig:alt-fan}, the vertices belonging to a dominating
set of size $k = (n-1)/\Delta$ are filled in (black).
Since, by Lemma~\ref{lemma:path}, any \incr dominating set
on a path $P$ in the standard order has at least $\CEIL{|V(P)|/2}$ vertices,
\onopt must select at least $\CEIL{(n-k)/2}$ vertices of $G$.
Inserting $k=(n-1)/\Delta$ into $(n-k)/2$ gives
$(n(\Delta-1)+1)/(2\Delta)$, resulting in a ratio larger than $(\Delta-1)/2$.

For~(\ref{bridges:cds}),
let $G$ be a bridge of degree $\Delta$ with $k$ sections, given in the 
adversarial order, and let $m=k(\Delta -2)$.
By Lemma~\ref{lemma:path}, we have $\onopt(G) \ge \onopt(P_m) = k(\Delta-2)-1$.
The last $k$ vertices form a connected dominating set of $G$ and, thus,
$\gamma_C(G) \le k$.

For~(\ref{bridges:tds}), 
let $G$ be a modular bridge of degree $\Delta$ with $k$ sections
given in the adversarial order.
Let $m=k(\Delta -1)$.
By Lemma~\ref{lemma:path}, we have $\onopt(G) \ge \onopt(P_m) =  k(\Delta-1)-1$.
Clearly, $\gamma_T(G) \le k$, and the result follows.
\qed\end{proof}

For any $n\ge 2$, a \emph{two-sided fan of size~$n$}
is the graph obtained from a
path on $n-2$ vertices by attaching two additional vertices, one to the
even-numbered vertices of the path and the other to the odd-numbered
vertices of the path. The two additional vertices are connected by
an edge. An adversarial order of a two-sided fan is defined by 
the standard order of the path, followed by the two additional
vertices in any order.
See Figure~\ref{fig:rotor} for an illustration of a two-sided fan of
size~$10$.

\begin{figure}[!htb]
\begin{center}
\begin{tikzpicture}[scale=0.75]
	\vertex       (p0) at (0,2) {};
	\vertex       (p1) at (1,2) {};
	\vertex       (p2) at (2,2) {};
	\vertex       (p3) at (3,2) {};
	\vertex       (p4) at (4,2) {};
	\vertex       (p5) at (5,2) {};
	\vertex       (p6) at (6,2) {};
	\vertex       (p7) at (7,2) {};
	\vertex       (bo) at (3.5,1) {};
	\vertex       (to) at (3.5,3) {};
	\path
		(p0) edge (p1)
		(p1) edge (p2)
		(p2) edge (p3)
		(p3) edge (p4)
		(p4) edge (p5)
		(p5) edge (p6)
		(p6) edge (p7)
		(bo) edge (p0)
		(bo) edge (p2)
		(bo) edge (p4)
		(bo) edge (p6)
		(to) edge (p1)
		(to) edge (p3)
		(to) edge (p5)
		(to) edge (p7)
	;
        \draw    (bo) to[out=0,in=0,distance=5.5cm] (to);
\end{tikzpicture}
\caption{A two-sided fan of size~$10$ (Proposition~\ref{prop:two-sided-fan}).
\label{fig:rotor}}
\end{center}
\end{figure}
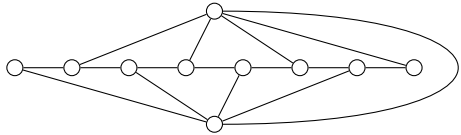

\begin{proposition}
\label{prop:two-sided-fan}
For both \cds and \tds
on always-connected bipartite planar graphs, we have
$\croff(\onopt) \ge (n-3)/2$.
\end{proposition}
\begin{proof}
Let $G_n$ be a two-sided fan of size~$n$, given in an adversarial order.
It suffices to prove that
$\onoptC(G_n) = \onoptT(G_n) = n-3$
and $\gamma(G_n) = \gamma_C(G_n) = \gamma_T(G_n) = 2$.
This is straightforward from the facts that the first $n-2$ vertices of~$G$
induce a path and any \incr connected or total dominating set on
$P_{n-2}$ given in the standard order has size at least~$n-3$.
\qed\end{proof}

\section{Conclusion and Open Problems}
Online algorithms for four variants of the dominating set problem
are analyzed using competitive analysis comparing to \onopt{} and \offopt{},
two reasonable alternatives for the optimal algorithm having
knowledge of the entire input.
Several graph classes are considered, and tight results are 
obtained in most cases.

The difference between \onopt and \offopt is that \onopt is
required to maintain an \incr solution (as any online algorithm),
while \offopt is only
required to produce a solution for the final graph.
The online algorithms are compared to both \onopt and \offopt,
and \onopt is compared to \offopt, in order to investigate
why all online algorithms tend to perform poorly against \offopt. 
Is this due only to the requirement to be \incr,
or is it more generally because of the lack of knowledge of the future?

Inspecting the results in the tables, perhaps the most striking conclusion
is that the competitive ratios of any online algorithm and \onopt,
respectively, against \offopt, are almost identical.
This indicates that the requirement to
maintain an \incr dominating set is a severe restriction, which can be 
offset by the full knowledge of the input only to a very small extent. 
On the other hand, when we restrict our attention to online algorithms
against \onopt,
it turns out that the handicap of not knowing the future still presents 
a barrier,
leading to competitive ratios of the order of~$n$ or $\Delta$
in most cases.

One could reconsider the nature of the irrevocable decisions, 
which originally stemmed from practical applications.
Which assumptions on irrevocability are relevant for practical 
applications, and which irrevocability components make the problem hard
from an online perspective?
We expect that these considerations will apply to many other online problems as well.

There is relatively little difference observed between three of the variants
of Dominating Set considered: Dominating Set, Connected Dominating
Set, and Total Dominating Set. In fact, the results for Total
Dominating Set generally followed directly from those for
Connected Dominating Set as a consequence of Lemma~\ref{lemma:tds-is-connected}.
The results for Independent Dominating Set were significantly different from
the others. It can be viewed as the minimum maximal
independent set problem since any maximal independent set is a
dominating set. This problem has been studied in the context of
investigating the performance of the greedy algorithm for the
independent set problem.
In fact, the unique incremental independent dominating set is the
set produced by the greedy algorithm for independent set.

In yet another orthogonal dimension, we compare the results for 
various graph classes.
Dominating Set is a special case of Set Cover and is notoriously 
difficult in classical complexity,
being NP-hard~\cite{K72}, $W[2]$-hard~\cite{DF95}, and not
approximable within $c\log n$ for any constant $c$
on general graphs~\cite{F98}.
On the positive side, on planar graphs, the problem is FPT~\cite{ABFKN02}
and admits a PTAS~\cite{B94}, 
and it is approximable within $\log \Delta$
on bounded-degree graphs~\cite{CH79}.
On the other hand, the relationship between
the performance of online algorithms
and structural properties of graphs
is not particularly well understood.
In particular, there are problems where
the absence of knowledge of the future is irrelevant; examples of such problems
in this work are \cds and \tds on trees, and \ids on any graph class.
As expected, for bounded-degree graphs, the competitive ratios 
are of the order of $\Delta$,
but closing the gap between $\Delta/2$ and $\Delta$
seems to require additional ideas.
On the other hand, for planar graphs, the problem, rather surprisingly, 
seems to be as difficult as the general case when compared to \offopt. 
When online algorithms for planar graphs
are compared to \onopt, we suspect there might be
an algorithm with constant competitive ratio.
At the same time, this case is the most notable open problem directly related 
to our results. Drawing inspiration from classical complexity,
one may want to eventually consider more specific graph classes in the 
quest for understanding exactly what structural properties
make the problem solvable.
From this perspective, our consideration of planar, bipartite, and
bounded-degree graphs is a natural first step. 

\section*{Acknowledgment}
The authors would like to thank an anonymous referee for
constructive suggestions.

\bibliographystyle{spmpsci}
\bibliography{refs}

\end{document}